\newcommand{\norm}[1]{\lVert#1\rVert}
\newcommand{\zer}[1]{\text{Zer}(#1)}
\newcommand{\fix}[1]{\text{Fix}(#1)}
\newcommand{\blkd}[1]{\text{blkd}(#1)}
\DeclarePairedDelimiter\floor{\lfloor}{\rfloor}
\newcommand{\supsub}[3]{{#1}^{\scriptscriptstyle #2}_{\scriptscriptstyle #3}}
\DeclareMathOperator{\minimize}{minimize}
\DeclareMathOperator{\subj}{subject\:to}
\DeclareMathOperator{\optT}{\mathcal{T}}
\DeclareMathOperator{\optA}{\mathcal{A}}
\DeclareMathOperator{\optB}{\mathcal{B}}
\newcommand{\agentN}{\mathcal{N}}
\newcommand{\edgeE}{\mathcal{E}}
\newcommand{\lagrg}{\mathcal{L}}
\newcommand{\feaset}{\mathcal{F}}
\newcommand{\glbset}{\mathcal{Q}}
\newcommand{\feasetx}{\mathcal{X}}
\newcommand{\commg}{\mathcal{G}}
\newcommand{\by}{\boldsymbol y}
\newcommand{\bd}{\boldsymbol d}
\newcommand{\bone}{\boldsymbol 1}
\newcommand{\bzero}{\boldsymbol 0}
\newcommand{\btau}{\boldsymbol \tau}
\newcommand{\rset}[2]{\mathbb{R}^{#1}_{#2}}
\newcommand{\pdset}[2]{\mathbb{S}^{#1}_{#2}}
\newcommand{\crset}[2]{\overline{\mathbb{R}}^{#1}_{#2}}
\newcommand{\nset}[2]{\mathbb{N}^{#1}_{#2}}
\newcommand{\neighbN}[2]{\mathcal{N}^{#1}_{#2}}
\newcommand{\axlW}{W}
\newcommand{\incdtBl}{B_l}
\newcommand{\lagrgnet}{\mathcal{L}_{\text{net}}}
\DeclareMathOperator{\graph}{gra}
\newtheorem{theorem}{Theorem}
\newtheorem{assumption}{Assumption}
\newtheorem{lemma}{Lemma}
\title{\LARGE \bf
A Primal Decomposition Approach to Globally Coupled Aggregative Optimization over Networks
}
\author{Yuanhanqing Huang and Jianghai Hu
\thanks{This work was supported by the National Science Foundation under Grant No. 2014816.}
\thanks{The authors are with the Elmore Family School of Electrical and Computer Engineering, Purdue University, West Lafayette, IN, 47907, USA
        {\tt\small \{huan1282, jianghai\}@purdue.edu}}%
}
\def\ps@IEEEtitlepagestyle{%
  \def\@oddfoot{\mycopyrightnotice}%
  \def\@evenfoot{}%
}
\def\mycopyrightnotice{
  {\footnotesize
  \begin{minipage}{\textwidth}
  \centering
© 2021 IEEE. Personal use of this material is permitted. Permission from IEEE must be obtained for all other uses, in any current or future media, including reprinting/republishing 
this material for advertising or promotional purposes, creating new collective works, for resale or redistribution to servers or lists, or reuse of any copyrighted component of this work in other works.
  \end{minipage}
  }
}
\let\old@ps@IEEEtitlepagestyle\ps@IEEEtitlepagestyle
\def\confheader#1{%
    \def\ps@IEEEtitlepagestyle{%
        \old@ps@IEEEtitlepagestyle%
        \def\@oddhead{\strut\hfill#1\hfill\strut}%
        \def\@evenhead{\strut\hfill#1\hfill\strut}%
    }%
    \ps@headings%
}
\begin{document}

\maketitle


\begin{abstract}

We consider a class of multi-agent optimization problems, where each agent has a local objective function that depends on its own decision variables and the aggregate of others, and is willing to cooperate with other agents to minimize the sum of the local objectives.  
After associating each agent with an auxiliary variable and the related local estimates, we conduct primal decomposition to the globally coupled problem and reformulate it so that it can be solved distributedly. 
Based on the Douglas-Rachford method, an algorithm is proposed which ensures the exact convergence to a solution of the original problem. 
The proposed method enjoys desirable scalability by only requiring each agent to keep local estimates whose number grows linearly with the number of its neighbors. 
We illustrate our proposed algorithm by numerical simulations on a commodity distribution problem over a transport network. 
\end{abstract}

\section{Introduction}

In a cooperative multi-agent system, there exists a group of agents each of whom has a specific objective function depending on the joint decision profile of all agents, and they cooperate with each other to optimize the sum of their local objectives. 
Over the past decade, considerable attention and effort have been paid to the consensus optimization problem \cite{yang2019survey, nedic2018distributed}. 
There is also some existing work where each agent keeps its own distinct decision variables \cite{tang2020zeroth, hu2018distributed}. 
In this paper, we restrict our attention to a special case where the influences of other agents' strategies can be represented through some aggregative coupling structures \cite{jensen2010aggregative}. 
The aggregative coupling structures have been used to model numerous applications, e.g., network congestion control \cite{barrera2014dynamic}, demand side management in smart grids \cite{wei2021mechanism}, and charging control of electric vehicles \cite{ma2011decentralized}. 
Besides the aggregative coupling in the objective functions, in many circumstances, the decisions of the agents may be subject to some global resource constraints \cite{camisa2021distributed, falsone2017dual}, such as total energy and communication channel capacity \cite{heydaribeni2019distributed}. 
The coupled objectives and strategy sets of the agents are at odds with local privacy concerns and limited scalability. 
Thus, distributed algorithms are preferred to solve such problems which only allow local exchanges of information. 

Our proposed solution to the above problem is inspired by some recent work in non-cooperative games on networks, i.e., the generalized Nash equilibrium problem (GNEP) \cite{facchinei2010generalized}, which has attracted increasing research interest, especially through the avenue of operator splitting \cite{belgioioso2018douglas, yi2019operator, pavel2019distributed}. 
For example, the algorithms proposed in \cite{parise2019distributed,parise2020distributed} carry out multiple rounds of communication within each iteration. 
With sufficient rounds of information exchange, the proposed algorithms can converge to an $\epsilon$-neighborhood of a generalized Nash equilibrium (GNE). 
The authors of \cite{liang2017distributed} design a continuous-time algorithm based on the projected dynamics and non-smooth tracking strategy, which is only applicable if the coupling constraints can be expressed as a system of linear equations. 
More recently, \cite{gadjov2020single, bianchi2020fast} introduce local estimates of the aggregates of interest, and then leverage the forward-backward splitting and proximal-point algorithms to compute the GNEs, respectively. 
The authors of \cite{belgioioso2020distributed} further develop an algorithm that can deal with time-varying communication networks by integrating the projected pseudo-gradient scheme with dynamic tracking. 
The convergence of these methods relies on a proper initialization and an invariance property throughout the algorithm iterations to ensure valid estimates, which make them vulnerable to system noise and malicious attacks. 

In this paper, we consider cooperative multi-agent optimization problems that are globally coupled by some aggregates as well as some affine global constraints.
We focus on the cases where the aggregates in objectives and global constraints share the same linear functional form. 
Our main contributions are as follows:
a) we present a primal decomposition scheme that converts the original globally coupled problem into local problems among individual agents, subject to some consensus constraints. 
We show that we can find a minimizer of the original problem by computing a zero of an operator derived from the decomposed problem; 
b) we use the Douglas-Rachford (DR) splitting method to develop a distributed algorithm for computing a zero of the previously derived operator. 
The exact convergence of the algorithm can be established without the need for the invariance property.

\textit{Basic Notations:} 
For a set of matrices $\{V_i\}_{i \in S}$, we let $\blkd{V_1, \ldots, V_{|S|}}$ or $\blkd{V_i}_{i \in S}$ denote the diagonal concatenation of these matrices, $[V_1, \ldots, V_{|S|}]$ their horizontal stack, and $[V_1; \cdots; V_{|S|}]$ their vertical stack. 
For a set of vectors $\{v_i\}_{i \in S}$, $[v_i]_{i \in S}$ or $[v_1; \cdots; v_{|S|}]$ denotes their vertical stack. 
For a vector $v$ and a positive integer $i$, $[v]_i$ denotes the $i$th entry of $v$. 
Denote $\crset{}{} \coloneqq \rset{}{} \cup \{+\infty\}$, $\rset{}{+} \coloneqq [0, +\infty)$, and $\rset{}{++} \coloneqq (0, +\infty)$. 
$\pdset{n}{+}$ (resp. $S^n_{++}$) represents the set of all $n\times n$ symmetric positive semi-definite (resp. definite) matrices.
$\iota_{\mathcal{S}}(x)$ is defined to be the indicator function of a set $\mathcal{S}$, i.e., if $x \in \mathcal{S}$, then $\iota_{\mathcal{S}}(x) = 0$; otherwise, $\iota_{\mathcal{S}}(x) = +\infty$. 
$N_{S}(x)$ denotes the normal cone to the set $S \subseteq \rset{n}{}$ at the point $x$: if $x \in S$, then $N_S(x) \coloneqq \{u \in \rset{n}{} \mid \sup_{z \in S} \langle u, z-x \rangle \leq 0 \}$; otherwise, $N_S(x) \coloneqq \varnothing$. 
We use $\rightrightarrows$ to indicate a point-to-set map. 
For an operator $T: \rset{n}{} \rightrightarrows \rset{n}{}$, $\zer{T} \coloneqq \{x \in \rset{n}{} \mid Tx \ni \bzero\}$ and $\fix{T} \coloneqq \{x \in \rset{n}{} \mid Tx \ni x\}$ denote its zero set and fixed point set, respectively. 

\section{Problem Formulation And Preliminaries}

\subsection{Cooperative Multi-Agent Optimization Problem}
We consider a group of agents indexed by $\agentN = \{1, \ldots, N\}$, where each agent $i \in \agentN$ shall choose its decision variables $x_i$ from its local feasible set $\mathcal{X}_i \subseteq \rset{n_i}{}$. 
The goal of each agent $i$ is to minimize its objective $J_i(x_i, s_i(x_{-i}))$, which depends both on its own local decision $x_i$ and the aggregate of other agents' decisions $s_i(x_{-i})$. 
We let the vector $x_{-i}$ represent the vertical stack of other agents' decisions. 
The aggregate $s_i:\rset{n_{-i}}{} \to \rset{l}{}$ is assumed to be of the form $s_i(x_{-i}) \coloneqq \sum_{j\in \neighbN{}{-i}} A_jx_j$, where $A_j \in \rset{l \times n_j}{}$, $n_{-i} \coloneqq \sum_{j \in \neighbN{}{-i}}n_j$, and $\neighbN{}{-i}$ denotes the set of all agents except $i$. 
Besides the local feasible sets, the decisions of all agents should also satisfy the global constraints given by $\tilde{\feasetx} \coloneqq \{x \in \mathcal{X} \mid \sum_{i \in \agentN}A_i x_i \leq c\}$,
where $x \coloneqq [x_1; \cdots; x_N]$, $\mathcal{X} \coloneqq \prod_{i \in \agentN} \mathcal{X}_i$, and $c \in \rset{l}{}$ is a constant vector denoting the total availability of $l$ global resources shared among all involved agents. 

The setting above gives rise to the specific formulations of the multi-agent optimization problem we are going to study. 
In this problem, this group of agents aim to cooperatively solve the following convex optimization problem: 
\begin{align}\label{eq:probsetup}
\begin{cases}
\underset{x_i \in \mathcal{X}_i, i \in \agentN}{\minimize} &\; J(x) \coloneqq \sum_{i \in \agentN}J_i(x_i, s_i(x_{-i})) \\
\subj & \; \sum_{i \in \agentN} A_ix_i \leq c.
\end{cases}
\end{align}

\begin{assumption}{(Existence of Subgradient)}\label{asp:subgrad} 
For each agent $i \in \agentN$, $J_i(x_{i}, s_i):\mathcal{X}_i \times \rset{l}{}\to\crset{}{}$ is an extended-real-valued closed convex proper (CCP) function in $x_i$ and $s_i$.
\end{assumption}
\begin{assumption}{(Feasible Sets)}\label{asp:fesbset}
Each local feasible set $\mathcal{X}_i$ is nonempty, closed and convex. 
The set $\tilde{\mathcal{X}}$ is nonempty and satisfies Slater's constraint qualification \cite[Sec.~3.2]{facchinei2007finite}. 
\end{assumption}

Under Assumptions~\ref{asp:subgrad} and \ref{asp:fesbset}, the problem \eqref{eq:probsetup} admits a nonempty minimizer set if and only if there exists a multiplier $\lambda \in \rset{l}{}$ such that the Karush-Kuhn-Tucker (KKT) system below holds \cite[Sec.~1.3]{facchinei2007finite}: 
\begin{equation}\label{eq:kkt-mini}
\begin{split}
& \partial_{x_i} J(x) + A_i^T\lambda + N_{\mathcal{X}_i}(x_i) \ni \bzero, \; \forall i \in \agentN, \\
& \bzero \leq \lambda \perp c - \textstyle{\sum}_{i \in \agentN}A_ix_i \geq \bzero,
\end{split}
\end{equation}
where the subgradient can be more explicitly written as $\partial_{x_i}J(x) = \partial_{x_i}J_i(x_i, s_i(x_{-i})) + \sum_{j \in \neighbN{}{-i}}A_i^T\partial_{s_j}J_j(x_j, s_j(x_{-j}))$.

\begin{assumption}\label{asp:e&uNE}
The set of minimizers of \eqref{eq:probsetup}, i.e., $\{x \mid J(x) < +\infty, J(x) \leq J(x'), \forall x' \in \tilde{\mathcal{X}}\}$, is nonempty. 
\end{assumption}

A (typically sparse) communication graph $\commg = (\agentN, \edgeE)$ is assumed to exist to implement local information exchanges, where $\edgeE \subseteq \agentN \times \agentN$ denotes the set of directed edges. 
Let $E$ be the cardinality of $\edgeE$. We denote by $(i, j)$ a directed edge with agent $i$ as its tail and agent $j$ as its head. 
Each agent can communicate with its neighbors through arbitrators on the incident edges and then update its local decision variables accordingly. 
For brevity of notation, define the sets of agent $i$'s in- and out-neighbors as $\neighbN{+}{i}\coloneqq \{j\in\mathcal{N}\mid(j, i)\in\mathcal{E}\}$ and $\neighbN{-}{i}\coloneqq \{j\in\mathcal{N}\mid(i, j)\in\mathcal{E} \}$, the cardinalities of which are denoted by $\supsub{N}{+}{i}$ and $\supsub{N}{-}{i}$,  respectively. 
Note that although the multipliers we are going to introduce are defined in a directed fashion, we assume each node can send messages to both its in- and out-neighbors, and $\mathcal{G}$ should satisfy the following assumption. 

\begin{assumption}{(Communicability)}\label{asp:comm}
The underlying communication graph $\mathcal{G} = (\mathcal{N}, \edgeE)$ is undirected and connected. Besides, it has no self-loops and $\forall i \in \agentN, \; \neighbN{-}{i} \neq \varnothing$. 
\end{assumption}

\subsection{The Douglas-Rachford (DR) Splitting Method}

A set-valued operator $\optT: \rset{n}{} \to 2^{\rset{n}{}}$ is called maximally monotone if $\forall (x, u) \in \graph(\optT)$ and $\forall (y, v) \in \graph(\optT)$, $\langle x- y, u-v\rangle \geq 0$, and its graph is not properly contained in the graph of any other monotone operators. 
By leveraging the resolvent of $\optT$, i.e., $J_{\optT} \coloneqq (I + \optT)^{-1}$, the proximal point iteration can generate a sequence converging to a zero of $\optT$ \cite[Sec.~23.4]{BauschkeHeinzH2017CAaM}.
However, this method has a drawback that the resolvent of $\optT$ often can not be easily evaluated. Another potential difficulty is that in a network with multiple agents, $J_{\optT}$ usually cannot be implemented distributedly. 

The DR splitting method decomposes the operator $\optT$ as the sum of two other operators $\optA$ and $\optB$, whose resolvents are easier to evaluate. This can thus alleviate the computational burden by evaluating $J_{\optA} \coloneqq (I + \optA)^{-1}$ and $J_{\optB} \coloneqq (I + \optB)^{-1}$ instead.  Moreover, if $\zer{\optT} \neq \varnothing$, the convergence of the DR method can be obtained under very mild assumptions, i.e., $\optA$ and $\optB$ are set-valued and maximally monotone \cite[Sec.~26.3]{BauschkeHeinzH2017CAaM}. 

A common strategy to solve networked problems is that, rather than focusing on $\optT(\psi) \ni \bzero$, we consider $\Phi^{-1} \optT(\psi) \ni \bzero$. 
Here, $\Phi$ is a positive definite matrix, called the design matrix, which is introduced to facilitate the distributed implementation. 
By applying the DR method to the splitting $\Phi^{-1} \optA + \Phi^{-1} \optB$, we obtain the following updating steps:
\begin{equation}\label{eq:D-R}
\begin{split}
& \text{Calculate}\; J_{\Phi^{-1}\optA}: \supsub{\psi}{(k+1)}{} \coloneqq J_{\Phi^{-1}\optA}(\supsub{\tilde{\psi}}{(k)}{});\\
& \text{R-R updates}: \supsub{\hat{\psi}}{(k+1)}{} \coloneqq 2 \cdot \supsub{\psi}{(k+1)}{} - \supsub{\tilde{\psi}}{(k)}{}; \\
& \text{Calculate}\; J_{\Phi^{-1}\optB}: \supsub{\bar{\psi}}{(k+1)}{} \coloneqq J_{\Phi^{-1}\optB}(\supsub{\hat{\psi}}{(k+1)}{}); \\
& \text{K-M updates}: \supsub{\tilde{\psi}}{(k+1)}{} \coloneqq \supsub{\tilde{\psi}}{(k)}{} + 2\supsub{\gamma}{(k)}{}(\supsub{\bar{\psi}}{(k+1)}{} - \supsub{\psi}{(k+1)}{}).
\end{split}
\end{equation}
In addition to the aforementioned conditions, if $(\supsub{\gamma}{(k)}{})_{k \in \nset{}{}}$ is a sequence in $[0, 1]$ satisfying $\sum_{k \in \nset{}{}}\supsub{\gamma}{(k)}{}(1 - \supsub{\gamma}{(k)}{}) = +\infty$, then the sequence $(\supsub{\psi}{(k)}{})_{k\in\nset{}{}}$ generated by \eqref{eq:D-R} will converge to a zero of $\optT$ \cite[Thm.~26.11]{BauschkeHeinzH2017CAaM}.  
The second and fourth steps in \eqref{eq:D-R} are trivial and the major workload resides in the first and third steps. 
We use $\supsub{\hat{\psi}}{(k)}{}$ to denote the results of the reflected resolvent (R-R) updates  $R_{\Phi^{-1}\optA}(\supsub{\tilde{\psi}}{(k)}{}) = 2\supsub{\psi}{(k)}{} - \supsub{\tilde{\psi}}{(k)}{}$ and $\supsub{\tilde{\psi}}{(k+1)}{}$ the results of the Krasnoselskij-Mann (K-M) updates $\supsub{\tilde{\psi}}{(k)}{} + 2\supsub{\gamma}{(k)}{}(\supsub{\bar{\psi}}{(k+1)}{} - \supsub{\psi}{(k+1)}{})$.
This set of notations is used throughout for brevity, and similar notations are defined for decisions $\supsub{x}{(k)}{}$ $(\supsub{\hat{x}}{(k)}{}, \supsub{\bar{x}}{(k)}{}, \supsub{\tilde{x}}{(k)}{})$, multipliers $\supsub{\lambda}{(k)}{}$ $(\supsub{\hat{\lambda}}{(k)}{}, \supsub{\bar{\lambda}}{(k)}{}, \supsub{\tilde{\lambda}}{(k)}{})$, etc.

\section{Distributed Algorithm Using Primal Decomposition}

\subsection{Primal Decomposition of the Problem}

Given the multi-agent optimization problem~\eqref{eq:probsetup}, we introduce for each agent $i$ a variable $\sigma_i \coloneqq \sum_{j \in \neighbN{}{-i}}A_jx_j \in \rset{l}{}$ to track the aggregate of other agents. 
The global problem~\eqref{eq:probsetup} can be equivalently recast into the following problems:
\begin{equation}\label{eq:probsetup3-1}
\small
\begin{split}
& (\forall i \in \agentN) \;
\begin{cases}
\underset{x_i \in \mathcal{X}_i, \sigma_i}{\minimize} & J_i(x_i, \sigma_i) \\
\subj & A_ix_i + \sigma_i \leq c;
\end{cases} \\ 
& \text{all agents collectively satisfy: } \sigma_i = {\textstyle\sum}_{j \in \neighbN{}{-i}}A_jx_j, \forall i \in \agentN. 
\normalsize
\end{split}
\end{equation}

The equivalent relationship between the problem \eqref{eq:probsetup} and \eqref{eq:probsetup3-1} is given in the following lemma for sake of clarity. 
\begin{lemma}\label{le:equiv-initial}
Suppose Assumptions~\ref{asp:subgrad} and \ref{asp:fesbset} hold. Then for any stack vector $x^* \coloneqq [x^*_i]_{i \in \agentN}$ where each $\{x^*_i, \sum_{j \in \neighbN{}{-i}}A_jx^*_j\}$ is a minimizer of agent $i$'s recast problem~\eqref{eq:probsetup3-1}, there exists a proper $\lambda^* \in \rset{l}{}$ such that $(x^*, \lambda^*)$ is a solution of \eqref{eq:kkt-mini}. 
Conversely, if the KKT system \eqref{eq:kkt-mini} admits a solution $(x^\dagger, \lambda^\dagger)$, where $x^\dagger \coloneqq [x^\dagger_i]_{i \in \agentN}$, for each agent $i \in \agentN$, $\{x^\dagger_i, \sum_{j \in \neighbN{}{-i}}A_jx^\dagger_j\}$ is a minimizer of its optimization problem~\eqref{eq:probsetup3-1}. 
\end{lemma}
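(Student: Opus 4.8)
The plan is to exploit convexity. Under Assumptions~\ref{asp:subgrad} and \ref{asp:fesbset}, both the monolithic problem \eqref{eq:probsetup} and each agent's recast problem \eqref{eq:probsetup3-1} are convex, so in each case a feasible point is a minimizer if and only if it satisfies the corresponding KKT system. For agent $i$'s recast problem, since $\sigma_i$ ranges over all of $\rset{l}{}$, a strictly feasible point is obtained by fixing any $x_i\in\mathcal{X}_i$ and taking $\sigma_i$ with sufficiently negative entries, so Slater's condition holds automatically and KKT is both necessary and sufficient there, while for \eqref{eq:probsetup} it is supplied by Assumption~\ref{asp:fesbset}. Hence the lemma reduces to showing that the system \eqref{eq:kkt-mini} and the collection of local KKT systems are equivalent under a suitable correspondence of multipliers.

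First I would write out agent $i$'s local KKT. Attaching a multiplier $\mu_i\in\rset{l}{+}$ to $A_ix_i+\sigma_i\leq c$ produces three conditions: (a) $\bzero\in\partial_{x_i}J_i(x_i,\sigma_i)+A_i^T\mu_i+N_{\mathcal{X}_i}(x_i)$; (b) $\bzero\in\partial_{\sigma_i}J_i(x_i,\sigma_i)+\mu_i$, i.e.\ $-\mu_i\in\partial_{s_i}J_i(x_i,\sigma_i)$; and (c) $\bzero\leq\mu_i\perp c-A_ix_i-\sigma_i\geq\bzero$. The crucial bookkeeping observation is that once the collective relation $\sigma_i=\sum_{j\in\neighbN{}{-i}}A_jx_j$ is imposed, every local slack collapses to the global slack, $c-A_ix_i-\sigma_i=c-\sum_{j\in\agentN}A_jx_j$, so all agents' complementarities reference the same quantity. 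The map linking the two descriptions is $\lambda=\sum_{i\in\agentN}\mu_i$ together with the identification of $-\mu_i$ as the selection of $\partial_{s_i}J_i$ furnished by (b).

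For the forward direction I would take the local multipliers $\{\mu_i\}$ guaranteed by the minimizers (each $\mu_i\geq\bzero$ by local Slater), set $\lambda^*\coloneqq\sum_{i\in\agentN}\mu_i$, and plug the selection $-\mu_k\in\partial_{s_k}J_k$ into the explicit expansion of $\partial_{x_i}J(x^*)$ in \eqref{eq:kkt-mini}. Because the summand $v_i+\sum_{k\in\neighbN{}{-i}}A_i^T(-\mu_k)$ (with $v_i\in\partial_{x_i}J_i$ from (a)) always lies in $\partial_{x_i}J(x^*)$ by the trivial inclusion of the subdifferential sum rule, the coupling terms telescope, $\sum_{k\in\neighbN{}{-i}}A_i^T(-\mu_k)+A_i^T\lambda^*=A_i^T\mu_i$, and the $x_i$-stationarity of \eqref{eq:kkt-mini} is recovered verbatim from (a). Nonnegativity $\lambda^*\geq\bzero$ and the perpendicularity $\langle\lambda^*,c-\sum_{j\in\agentN}A_jx_j^*\rangle=\sum_{i\in\agentN}\langle\mu_i,c-\sum_{j\in\agentN}A_jx_j^*\rangle=0$ follow by summing the local complementarities, so $(x^*,\lambda^*)$ solves \eqref{eq:kkt-mini}. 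This half is essentially routine.

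The converse is the delicate direction, and I expect it to be the main obstacle. Given a KKT pair $(x^\dagger,\lambda^\dagger)$, the expanded $x_i$-stationarity reads $\bzero\in\partial_{x_i}J_i+A_i^T(\lambda^\dagger+\sum_{k\in\neighbN{}{-i}}g_k)+N_{\mathcal{X}_i}(x_i^\dagger)$ for suitable selections $g_k\in\partial_{s_k}J_k$, so defining $\mu_i\coloneqq\lambda^\dagger+\sum_{k\in\neighbN{}{-i}}g_k$ makes (a) hold by construction; the whole burden shifts onto (b) and (c). The hard part will be reconciling the per-agent selections into a single global family $\{g_k\}$ for which $\lambda^\dagger=-\sum_{k\in\agentN}g_k$ — the very relation that forces $\mu_i\in-\partial_{s_i}J_i$, closing (b). Unlike the forward step, this requires the nontrivial inclusion of the sum rule to genuinely decompose a given element of $\partial_{x_i}J(x^\dagger)$, which is where Assumptions~\ref{asp:subgrad} and \ref{asp:fesbset} must be invoked. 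Finally I would verify that the recovered $\mu_i$ inherits nonnegativity and complementarity against the common slack $c-\sum_{j\in\agentN}A_jx_j^\dagger$; since this amounts to the decoupled local problems inheriting the well-posedness of the coupled problem, it is the step I would scrutinize most carefully.
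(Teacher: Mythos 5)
There is a genuine gap, and it sits exactly where you predicted: the converse direction. The root cause is your choice of local KKT system. You treat agent $i$'s recast problem as having only the resource constraint $A_ix_i+\sigma_i\leq c$ (with multiplier $\mu_i\geq\bzero$) and leave $\sigma_i$ otherwise free, so your $\sigma_i$-stationarity (b) reads $-\mu_i\in\partial_{\sigma_i}J_i(x_i,\sigma_i)$ with $\mu_i\geq\bzero$ and complementary to the common slack. That condition is generically unsatisfiable at a solution of \eqref{eq:kkt-mini}: if the global resource constraint is inactive at $x^\dagger$, complementarity forces $\mu_i=\bzero$, and (b) then demands $\bzero\in\partial_{\sigma_i}J_i(x_i^\dagger,\sigma_i^\dagger)$ for \emph{every} agent, i.e., every local objective is stationary in the aggregate at the true aggregate value --- already false for $J_i(x_i,s_i)=f_i(x_i)+c_i^{T}s_i$ with $c_i\neq\bzero$. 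Equivalently, the identity $\lambda^\dagger=-\sum_k g_k$ that you isolate as ``the hard part'' simply does not hold in general ($\lambda^\dagger$ is pinned by complementarity while $\sum_k g_k$ is not), so (b) cannot be closed; the obstacle you flagged is not a technicality to be scrutinized but a sign that the decomposition is missing a degree of freedom.

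The paper's proof supplies that degree of freedom by putting the ``all agents collectively satisfy'' constraints $\sigma_i=\sum_{j\in\neighbN{}{-i}}A_jx_j$ \emph{inside} the KKT system with their own sign-free multipliers $d_i\in\rset{l}{}$, shared across agents because each such constraint involves everyone's decision. The $\sigma_i$-row becomes $\partial_{\sigma_i}J_i+\lambda_i+d_i\ni\bzero$, which is solvable for $d_i$ regardless of what $\partial_{\sigma_i}J_i$ and $\lambda_i$ are, and the $x_i$-row picks up the cross terms $-\sum_{j\in\neighbN{}{-i}}A_i^Td_j$. Your forward-direction telescoping survives essentially unchanged in this setting (substitute $-d_j$ from the $\sigma_j$-row into the $x_i$-row and set $\lambda^*=\sum_i\lambda_i^*$), and the converse becomes a one-line construction: take $\sigma_i^\dagger=\sum_{j\in\neighbN{}{-i}}A_jx_j^\dagger$, $\lambda_i^\dagger=\lambda^\dagger/N$, and $d_i^\dagger\in-\partial_{\sigma_i}J_i(x_i^\dagger,\sigma_i^\dagger)-\lambda^\dagger/N$. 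Without the $d_i$'s there is no free parameter left to absorb $\partial_{\sigma_i}J_i$, and the converse half of the lemma is false under your reading of \eqref{eq:probsetup3-1}.
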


\begin{proof}
See Appendix~\ref{pf:equiv-initial}. 
\end{proof}

However, the current formulation of $\sigma_i$ is still globally dependent, while we need to estimate this aggregate with only local communications. 
To this end, we associate a weight matrix $W \in \rset{N \times N}{}$ with the communication graph $\commg$ that satisfies the conditions in Assumption \ref{asp:comm}. 
If there is a directed edge from agent $i$ to agent $j$ ($j \neq i$), $W_{ij}$ is assigned some proper negative value; otherwise, $W_{ij} = 0$. 
The diagonal entries are set to be $W_{ii} = -\sum_{j \in \agentN} W_{ij} = -\sum_{j \in \neighbN{-}{i}}W_{ij} > 0$. 
By the connectivity condition in Assumption \ref{asp:comm}, the defined weight matrix $W$ has an eigenvalue $0$ with multiplicity $1$, with an associated eigenvector $\bone_N$. 
Furthermore, we endow each agent $i$ with an auxiliary variable $y_i \in \rset{l}{}$.
With the introduction of $\{y_i\}_{i \in \agentN}$ and weight matrix $W$, we can rewrite the problem \eqref{eq:probsetup3-1} as follows: 
\begin{equation}\label{eq:probsetup3-2}
\begin{cases}
    \underset{x_i \in \mathcal{X}_i, \sigma_i, y_i}{\minimize} \; J_i(x_i, \sigma_i) \\
    \subj \; A_i x_i + \sigma_i \leq c \\
    \qquad\quad (N-1)A_i x_i - \sigma_i = \axlW_{ii}y_i + \sum_{j \in \neighbN{+}{i}}\axlW_{ji} y_{j} \\
    \qquad\quad A_i x_i + \sigma_i = A_j x_j + \sigma_j, \forall j \in \agentN_{i}^{+}.
 \end{cases}
\end{equation}
A proof of the equivalence between \eqref{eq:probsetup3-1} and \eqref{eq:probsetup3-2} can be found in Appendix~\ref{pf:agggamecnst}.
Besides its own variables, the problem~\eqref{eq:probsetup3-2} of agent $i$ only involves the auxiliary variables of its in-neighbors. Hence, \eqref{eq:probsetup3-2} is a locally dependent problem. 
To facilitate the distributed implementation, each agent $i$ is assumed to keep a local estimate for the auxiliary variable of each of its in-neighbors. 
Let $y_{ji}$ denote the local estimate of $y_j$ kept by agent $i$. With these local estimates, we can rewrite the preceding problem \eqref{eq:probsetup3-2} as: 
\begin{equation}\label{eq:probsetup3-3}
    \begin{cases}
    \underset{x_i \in \mathcal{X}_i, \sigma_i, \by_i}{\minimize} \; J_i(x_i, \sigma_i) \\
    \subj \; A_i x_i + \sigma_i \leq c \\
    \qquad\quad (N-1)A_i x_i - \sigma_i = \axlW_{ii}y_i + \sum_{j \in \neighbN{+}{i}}\axlW_{ji} y_{ji} \\
    \qquad\quad A_i x_i + \sigma_i = A_j x_j + \sigma_j, \forall j \in \agentN_{i}^{+} \\
    \qquad\quad y_{ji} = y_{j}, \forall j \in \neighbN{+}{i}, 
    \end{cases}
\end{equation}
where $\by_i \coloneqq [y_i; [y_{ji}]_{j \in \neighbN{+}{i}}]$. 
By this reformulation, the coupling with other agents is sandboxed inside the consensus constraints between each auxiliary variable $y_i$ and the local estimate $y_{ij}$ kept by each out-neighbor $j$. 
As will be shown later, these consensus constraints can be enforced by some simple updating steps, making \eqref{eq:probsetup3-3} more tractable than the one in \eqref{eq:probsetup3-2}. 
It is worth highlighting that the local variables kept by each agent $i$ has dimension $n_i + l(2 + \supsub{N}{+}{i})$ which merely grows linearly with the number of its in-neighbors and allows better scalability in a sparsely connected communication network. 

To study agent $i$'s local optimization problem \eqref{eq:probsetup3-3}, we consider its associated Lagrangian $\lagrg_{i}$ defined as follows:
\small
\begin{align*}
\lagrg_{i} &= J_i(x_i, \sigma_i) + \iota_{\mathcal{X}_i}(x_i) + \iota_{\glbset_i}(x_i, \sigma_i) + \iota_{\feaset_i}(x_i, \sigma_i, \by_i)\\
& + {\textstyle\sum}_{j \in \neighbN{+}{i}}\big(\lambda_{ji}^T (A_i x_i + \sigma_i - A_j x_j - \sigma_j) + \mu_{ji}^T (y_{ji} - y_j)\big),
\end{align*}
\normalsize
where $\lambda_{ji}$ is the Lagrange multiplier of the constraint $A_ix_i + \sigma_i - A_jx_j - \sigma_j = 0$; 
$\mu_{ji}$ is the Lagrange multiplier incentivizing the consensus between $y_{ji}$ and $y_{j}$; 
$\glbset_i \coloneqq \{(x_i, \sigma_i) \mid A_ix_i + \sigma_i \leq c\}$ denotes the local feasible set corresponding to the first constraint of \eqref{eq:probsetup3-3};
and $\feaset_i \coloneqq \{(x_i, \sigma_i, \by_i) \mid (N-1)A_ix_i - \sigma_i = W_{ii}y_i + \sum_{j \in \neighbN{+}{i}}W_{ji}y_{ji}\}$ denotes the local feasible set corresponding to the second constraint of \eqref{eq:probsetup3-3}. 

Summing the Lagrangians of all agents, we obtain the Lagrangian for the network optimization problem given by:
\begin{equation}\label{eq:lagrg-net}
\small
\begin{split}
& \lagrg_{\text{net}} = \underset{i \in \agentN}{\sum} \big( J_i(x_i, \sigma_i) + \iota_{\mathcal{X}_i}(x_i) + \iota_{\glbset_i}(x_i, \sigma_i)+ \iota_{\feaset_i}(x_i, \sigma_i, \by_i) \big) \\
& +\underset{(j, i) \in \edgeE}{\sum} \big(\lambda_{ji}^T(A_i x_i + \sigma_i - A_j x_j - \sigma_j) + \mu_{ji}^T(y_{ji} - y_j) \big).
\end{split}
\normalsize
\end{equation}
As will be shown later in the paper, by finding a saddle point of the network Lagrangian, we can obtain a solution of the original problem \eqref{eq:probsetup}. 
For convenience, we shall write $\{\by_i\}$ in replacement of the more cumbersome notation $\{\by_i\}_{i \in \agentN}$ and similarly for other variables on nodes and edges (e.g. $\{\mu_{ji}\}$ in replacement of $\{\mu_{ji}\}_{(j,i) \in \edgeE_g}$), unless otherwise specified. 

Noting the structure of the consensus term in the network Lagrangian defined above, i.e., 
$\sum_{(j, i) \in \edgeE}\mu_{ji}^T(y_{ji} - y_j)$, 
we can organize $\{\by_i\}$ and $\{\mu_{ji}\}$ into a single vector $\omega \coloneqq [\omega_1; \cdots; \omega_i; \cdots; \omega_N]$, and construct a matrix $M_y \coloneqq \blkd{M_{y_i}}_{i \in \agentN}\in \rset{(2E + N) \times (2E+N)}{}$ with $\omega_i$ and $M_{y_i}$ defined by:
\begin{align}
\omega_i \coloneqq \begin{bmatrix}
y_i \\ 
\mu_{ij_1} \\
y_{ij_1} \\
\vdots \\
\mu_{ij_{N^-_i}} \\
y_{ij_{N^-_i}}
\end{bmatrix}, 
M_{y_i} \coloneqq \begin{bmatrix}
0 & -1 & 0 & \cdots & -1 & 0 \\
-1 & 0 & 1 & \cdots & 0 & 0 \\
0 & 1 & 0 & \cdots & 0 & 0 \\
\vdots & \vdots & \vdots & \ddots & \vdots & \vdots \\
-1 & 0 & 0 & \cdots & 0 & 1 \\
0 & 0 & 0 & \cdots & 1 & 0 \\
\end{bmatrix},
\end{align}
where $(j_1,\ldots, j_{N^{-}_{i}})$ is an arbitrary ordered list of $\neighbN{-}{i}$. Then, the consensus term can be written as $\frac{1}{2} \omega^T (M_y \otimes I_l) \omega$. 

To solve the multi-agent optimization problem, we need to find the stationary points of the network Lagrangian \eqref{eq:lagrg-net}. By taking the subgradient of $\lagrgnet$ w.r.t. each variable and reversing the sign of the rows corresponding to the dual variables, we can derive the set-valued operator $\optT$ given by:
\begin{equation}
\begin{split}
& \optT: \psi \mapsto 
\partial\big({\textstyle\sum}_{i \in \agentN} J_i(x_i, \sigma_i) + \iota_{\mathcal{X}_i}(x_i) + \iota_{\glbset_i}(x_i, \sigma_i)  \\
&  + \iota_{\feaset_i}(x_i, \sigma_i, \by_i)\big) +  \begin{bmatrix}
\bzero & \bzero & A^T\incdtBl & \bzero \\
\bzero & \bzero & \incdtBl & \bzero \\
-\incdtBl^TA & -\incdtBl^T & \bzero & \bzero \\
\bzero & \bzero & \bzero & M_y'
\end{bmatrix}\psi, 
\end{split}
\end{equation}
where $A \coloneqq \blkd{A_i}_{i \in \agentN}$; 
$B$ is the incidence matrix of the communication graph $\mathcal{G}$; $B_l \coloneqq (B \otimes I_l)$; 
$M_y' \coloneqq \blkd{M_{y_i}'}_{i \in \agentN}$ with $M_{y_i}'$ being a skew-symmetric matrix generated by reversing the sign of the even rows of $M_{y_i}$; 
$\sigma$ and $\lambda$ are the stack vectors of $\{\sigma_i\}$ and $\{\lambda_{ji}\}$; $\psi$ denotes the stack of the these variables, i.e. $\psi \coloneqq [x; \sigma; \lambda; \omega]$.

\begin{theorem}\label{thm:agggamecnst}
Suppose Assumptions \ref{asp:subgrad}, \ref{asp:fesbset}, and \ref{asp:comm} hold. Then for any zero point $[x^*; \sigma^*; \lambda^*; \omega^*] \in \zer{\optT}$, $x^*$ is a minimizer of (\ref{eq:probsetup}). Conversely, for any minimizer $x^\dagger$ of the problem (\ref{eq:probsetup}), we can choose proper $\sigma^\dagger$, $\lambda^\dagger$, and $\omega^\dagger$, such that $[x^\dagger; \sigma^\dagger; \lambda^\dagger; \omega^\dagger] \in \zer{\optT}$.
\end{theorem}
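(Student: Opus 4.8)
The plan is to identify $\zer{\optT}$ with the saddle points of the network Lagrangian $\lagrgnet$ in~\eqref{eq:lagrg-net}, and then to transport this identification back to~\eqref{eq:probsetup} along the equivalences already in place: between~\eqref{eq:probsetup3-3}/\eqref{eq:probsetup3-2} and~\eqref{eq:probsetup3-1} (Appendix~\ref{pf:agggamecnst}) and between~\eqref{eq:probsetup3-1} and the KKT system~\eqref{eq:kkt-mini} (Lemma~\ref{le:equiv-initial}). Since $\lagrgnet$ is convex in its primal variables $(x,\sigma,\by)$ and affine, hence concave, in its dual variables, and since $\optT$ is assembled by taking the primal subdifferential and reversing the sign of the dual rows, a point $\psi^\star=[x^\star;\sigma^\star;\lambda^\star;\omega^\star]$ lies in $\zer{\optT}$ if and only if it is a saddle point of $\lagrgnet$. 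I would first verify block by block that the affine part of $\optT$ reproduces the coupling terms of $\lagrgnet$: the $\lambda$-row yields primal feasibility of the edge constraints $A_ix_i+\sigma_i=A_jx_j+\sigma_j$ through the incidence matrix $\incdtBl$; the $x$- and $\sigma$-rows inject the edge multipliers weighted by $A_i^T$ and the identity; and the skew-symmetric block $M_y'$ simultaneously enforces the consensus $y_{ji}=y_j$ via the $\mu$-components of $\omega$ and feeds the multipliers $\mu_{ji}$ into the stationarity for the auxiliary variables.

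For the forward implication I would expand $\bzero\in\optT(\psi^\star)$ componentwise. The $\lambda$-block gives $A_ix_i^\star+\sigma_i^\star=A_jx_j^\star+\sigma_j^\star$ on every edge, so by connectivity (Assumption~\ref{asp:comm}) the quantity $A_ix_i^\star+\sigma_i^\star$ equals a common vector $\rho$ for all $i$; the $\mu$-components force $y_{ji}^\star=y_j^\star$, whereupon membership in $\feaset_i$ reads $(N-1)A_ix_i^\star-\sigma_i^\star=\axlW_{ii}y_i^\star+\sum_{j\in\neighbN{+}{i}}\axlW_{ji}y_j^\star$. Summing this identity over $i$ and invoking $\axlW\bone_N=\bzero$ annihilates the right-hand side and gives $\sum_i\sigma_i^\star=(N-1)\sum_iA_ix_i^\star$; combined with $N\rho=\sum_i(A_ix_i^\star+\sigma_i^\star)$ this forces $\rho=\sum_iA_ix_i^\star$ and hence $\sigma_i^\star=\sum_{j\in\neighbN{}{-i}}A_jx_j^\star$ for every $i$. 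The remaining primal rows are exactly the stationarity conditions of~\eqref{eq:probsetup3-3}, with the resource multiplier carried by the normal cone to $\glbset_i$; thus $\{x_i^\star,\sigma_i^\star\}$ solves each local problem, and Appendix~\ref{pf:agggamecnst} together with Lemma~\ref{le:equiv-initial} and Assumptions~\ref{asp:subgrad}--\ref{asp:fesbset} then certify that $x^\star$ satisfies~\eqref{eq:kkt-mini} and minimizes~\eqref{eq:probsetup}.

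For the converse I would run the chain backward. A minimizer $x^\dagger$ yields, via Assumptions~\ref{asp:subgrad}--\ref{asp:fesbset} and standard KKT theory, a resource multiplier solving~\eqref{eq:kkt-mini}, and Lemma~\ref{le:equiv-initial} makes each $\{x_i^\dagger,\sum_{j\in\neighbN{}{-i}}A_jx_j^\dagger\}$ a minimizer of~\eqref{eq:probsetup3-1}; I set $\sigma_i^\dagger\coloneqq\sum_{j\in\neighbN{}{-i}}A_jx_j^\dagger$, so that $A_ix_i^\dagger+\sigma_i^\dagger$ is constant across agents and the edge multipliers $\lambda_{ji}^\dagger$ can be chosen to make the $x$- and $\sigma$-stationarity rows hold. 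It then remains to produce auxiliary variables $y^\dagger$, consensual local copies $y_{ji}^\dagger=y_j^\dagger$, and multipliers $\mu_{ji}^\dagger$ so that the $\omega$-block vanishes; the only nontrivial point is solvability of $(N-1)A_ix_i^\dagger-\sigma_i^\dagger=\axlW_{ii}y_i^\dagger+\sum_{j\in\neighbN{+}{i}}\axlW_{ji}y_j^\dagger$ for $y^\dagger$, which holds because the aggregate of its right-hand side vanishes, placing the residual in $\operatorname{range}(\axlW^T\otimes I_l)=\{r\mid\sum_i r_i=\bzero\}$, again a consequence of $\axlW\bone_N=\bzero$ and Assumption~\ref{asp:comm}.

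I expect the main obstacle to be the $\omega$-block: one must show that the zero condition for the skew-symmetric operator $M_y'$ splits cleanly into the consensus identities $y_{ji}=y_j$ on one hand and the correct placement of the multipliers $\mu_{ji}$ in the auxiliary-variable stationarity on the other, and, in the converse direction, that the $\axlW$-constraint is always solvable by a suitable $y^\dagger$ and its local copies. Once the kernel--range structure induced by $\axlW\bone_N=\bzero$ is used both to recover the aggregate $\sigma_i^\star=\sum_{j\in\neighbN{}{-i}}A_jx_j^\star$ and to guarantee this solvability, every remaining block matches the stationarity and feasibility of~\eqref{eq:probsetup3-3} termwise, and the theorem follows from the previously established equivalences.
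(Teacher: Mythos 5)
Your overall route is the same as the paper's: expand $\bzero\in\optT(\psi^\star)$ block by block, use the skew block $M_y'$ to force $y_{ji}^\star=y_j^\star$, use the edge-consensus rows together with $\axlW\bone_N=\bzero$ to recover $\sigma_i^\star=\sum_{j\in\neighbN{}{-i}}A_jx_j^\star$ (your summation argument here is correct and matches the paper's), and in the converse use $\mathcal{R}(\axlW^T\otimes I_l)=\mathcal{N}(\bone_N^T\otimes I_l)$ to solve for $y^\dagger$. The converse direction as you sketch it is essentially the paper's proof.

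There is, however, one genuine gap in your forward direction. After recovering the primal consensus you assert that "the remaining primal rows are exactly the stationarity conditions of \eqref{eq:probsetup3-3}; thus $\{x_i^\star,\sigma_i^\star\}$ solves each local problem," and you then hand off to Lemma~\ref{le:equiv-initial}. This does not close. Each local problem \eqref{eq:probsetup3-3} carries its \emph{own} implicit multiplier $d_{1i}$ for the constraint $\feaset_i$, whereas the KKT system \eqref{eq:dist-kkt-pf} underlying Lemma~\ref{le:equiv-initial} requires a \emph{single shared} multiplier vector $d$ for the aggregate-tracking constraints: it is only through the substitution $(N-1)d_1=\sum_{j\in\neighbN{}{-i}}\bigl(d_{2j}+\lambda^\star_{jB}+\partial_{\sigma_j}J_j(x_j^\star,\sigma_j^\star)\bigr)$ that the cross-terms $A_i^T\partial_{\sigma_j}J_j$ of $\partial_{x_i}J(x^\star)$ in \eqref{eq:kkt-mini} are reassembled, and that substitution is valid only if $d_{11}=\cdots=d_{1N}$. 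Establishing this \emph{dual} consensus is the real content of the $\omega$-block in the forward direction, and you never state it: the paper derives $(W_{ii}\otimes I_l)d_{1i}=-\sum_{k\in\neighbN{-}{i}}\mu^\star_{ik}$ and $(W_{ik}\otimes I_l)d_{1k}=\mu^\star_{ik}$ from the $\by_i$-rows of \eqref{eq:zeroincl3}, sums to get $(W\otimes I_l)\bd=\bzero$, and invokes connectivity to conclude the $d_{1i}$ coincide. Your phrase "the correct placement of the multipliers $\mu_{ji}$ in the auxiliary-variable stationarity" gestures at these equations but does not extract the consensus of the $d_{1i}$ from them, and without it the step from local optimality in \eqref{eq:probsetup3-3} to the global condition \eqref{eq:kkt-mini} fails. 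Relatedly, citing Appendix~\ref{pf:agggamecnst} for the equivalence of \eqref{eq:probsetup3-1} and \eqref{eq:probsetup3-2} is circular, since that appendix \emph{is} the proof of the theorem you are proving; the equivalence must be argued, not assumed. (Minor omissions in the same spirit: you should also record that $\sum_{j\in\agentN}\lambda^\star_{jB}=\bzero$ by the incidence structure, and that the local complementarity conditions $\bzero\le d_{2i}\perp A_ix_i^\star+\sigma_i^\star-c\le\bzero$ sum to the global one because $A_ix_i^\star+\sigma_i^\star$ is agent-independent.)
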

\begin{proof}
See Appendix~\ref{pf:agggamecnst}.
\end{proof}

\subsection{Operator Splitting and Distributed Algorithm}

We can split the operator $\optT$ into two operators $\optA$ and $\optB$, and construct a design matrix $\Phi$, which are given by:
\begin{equation}
\begin{split}
\optA : \begin{bmatrix} x \\ \sigma \\ \lambda \\ \omega \end{bmatrix} &\mapsto \partial (\sum_{i \in \agentN} J_i(x_i, \sigma_i)) \\
& + \begin{bmatrix}
\bzero & \bzero & \frac{1}{2}A^TB_l & \bzero \\
\bzero & \bzero & \frac{1}{2}B_l & \bzero \\
-\frac{1}{2}B_l^TA & -\frac{1}{2}B_l^T & \bzero & \bzero \\
\bzero & \bzero & \bzero & M_y'
\end{bmatrix}
\begin{bmatrix} x \\ \sigma \\ \lambda \\ \omega \end{bmatrix},
\end{split}
\end{equation}
\begin{equation}
\begin{split}
\optB : \begin{bmatrix} x \\ \sigma \\ \lambda \\ \omega \end{bmatrix} &\mapsto  \partial \big(\sum_{i \in \agentN}\iota_{\mathcal{X}_i}(x_i) + \iota_{\glbset_i}(x_i, \sigma_i) + \iota_{\feaset_i}(x_i, \sigma_i, \by_i)\big) \\
& + \begin{bmatrix}
\bzero & \bzero & \frac{1}{2}A^TB_l & \bzero \\
\bzero & \bzero & \frac{1}{2}B_l & \bzero \\
-\frac{1}{2}B_l^TA & -\frac{1}{2}B_l^T & \bzero & \bzero \\
\bzero & \bzero & \bzero & \bzero
\end{bmatrix}
\begin{bmatrix} x \\ \sigma \\ \lambda \\ \omega \end{bmatrix},
\end{split}
\end{equation}
\begin{equation}
\Phi \coloneqq 
\begin{bmatrix}
\btau_1^{-1} & \bzero & -\frac{1}{2} A^TB_l & \bzero \\
\bzero & \btau_2^{-1} & -\frac{1}{2} B_l & \bzero \\
-\frac{1}{2} B_l^T A & -\frac{1}{2} B_l^T & \btau_3^{-1} & \bzero \\
\bzero & \bzero & \bzero & \btau_4^{-1}
\end{bmatrix},
\end{equation}
where $\btau_1 \coloneqq \blkd{\tau_{11} I_n, \ldots, \tau_{1N} I_n}$ with the scalars $\supsub{\tau}{}{1i} > 0$ for $i \in \agentN$;
similarly for $\btau_2$, $\btau_3$, and $\btau_4$. 
These step sizes can be chosen based on the Gershgorin circle theorem \cite{bell1965gershgorin} to guarantee $\Phi \in \pdset{}{++}$. 
The operator $\optA$ is maximally monotone since in Assumption~\ref{asp:subgrad}, each objective $J_i$ is assumed to be jointly convex in $x_i$ and $\sigma_i$, and $D + \bar{M}_y'$ is skew-symmetric. 
The maximal monotonicity of $\optB$ can be similarly established. 

As suggested in the DR splitting \eqref{eq:D-R}, we next evaluate the analytical expressions for $J_{\Phi^{-1}\optA}$ and $J_{\Phi^{-1}\optB}$. 
For brevity, let $\lambda^{(k)}_{iB} \coloneqq \sum_{j \in \neighbN{+}{i}} \lambda^{(k)}_{ji} - \sum_{j \in \neighbN{-}{i}}\lambda^{(k)}_{ij}$, which can be obtained through the communications among agent $i$ and its incident edges. The variable $\hat{\lambda}^{(k)}_{iB}$ is defined similarly. 

For $\psi^{(k+1)} \coloneqq J_{\Phi^{-1}\optA}(\Tilde{\psi}^{(k)})$, it corresponds to the inclusion $(\Phi + \optA)\psi^{(k+1)} \ni \Phi \Tilde{\psi}^{(k)}$. 
Each agent $i$ can update its local decisions $x_i$ and local estimates $\sigma_i$ by solving the following problem using local information $\tilde{x}^{(k)}_i$, $\tilde{\sigma}^{(k)}_i$ and the dual information $\tilde{\lambda}^{(k)}_{iB}$ from its incident edges: 
\begin{equation}\label{eq:optA-updt-x-sigma}
\begin{split}
\underset{x_i, \sigma_i}{\minimize} \;& J_i(x_i, \sigma_i) + \tfrac{1}{2}(\tilde{\lambda}^{(k)}_{iB})^T(A_ix_i + \sigma_i) \\
& + \tfrac{1}{2\tau_{1i}}\norm{x_i - \tilde{x}^{(k)}_i}^2_2 + \tfrac{1}{2\tau_{2i}}\norm{\sigma_i - \tilde{\sigma}^{(k)}_i}^2_2.
\end{split}
\end{equation}
After both incident agents solve \eqref{eq:optA-updt-x-sigma}, the dual variable $\lambda^{(k+1)}_{ji}$ maintained by the edge $(j, i)$ can be updated by:
\begin{equation}\label{eq:optA-updt-lambda}
\supsub{\lambda}{(k+1)}{ji} = \supsub{\tilde{\lambda}}{(k)}{ji} + \tfrac{1}{2}\supsub{\tau}{}{3ji} \cdot \big(\supsub{A}{}{i}\supsub{\hat{x}}{(k+1)}{i} + \supsub{\hat{\sigma}}{(k+1)}{i} - \supsub{A}{}{j}\supsub{\hat{x}}{(k+1)}{j} - \supsub{\hat{\sigma}}{(k+1)}{j}),
\end{equation}
where $\supsub{\hat{x}}{(k+1)}{i}$ and $\supsub{\hat{\sigma}}{(k+1)}{i}$ are the results of $R_{\Phi^{-1}\optA}$ as shown in \eqref{eq:D-R}.
The update of $\supsub{\omega}{(k+1)}{}$ is given by $(\btau_4^{-1} + M_y') \cdot \supsub{\omega}{(k+1)}{} = \btau_4^{-1}\supsub{\tilde{\omega}}{(k)}{}$, which is independent of the updates of $x$, $\sigma$ and $\lambda$. 
To have the analytical solution to it, notice that:
\begin{equation}
\begin{split}
& \supsub{\mu}{(k+1)}{ij} + \supsub{\tau}{}{4i}(\supsub{y}{(k+1)}{i} - \supsub{y}{(k+1)}{ij}) = \supsub{\Tilde{\mu}}{(k)}{ij}, \;\;\; \text{edge }(i, j); \\
& \supsub{y}{(k+1)}{i} + \tau_{4i}(-{\textstyle\sum}_{j \in \neighbN{-}{i}} \supsub{\mu}{(k+1)}{ij}) = \supsub{\Tilde{y}}{(k)}{i}, \;\; \text{agent }i; \\
& \supsub{y}{(k+1)}{ij} + \tau_{4i} \supsub{\mu}{(k+1)}{ij} = \supsub{\Tilde{y}}{(k)}{ij}, \qquad \qquad \;\;\; \text{agent }j.
\end{split}
\end{equation}
The above computation is restricted to the agent $i$, its out-edges, and its out-neighbors. We can thus derive the analytical expressions for the local updates of auxiliary variables and their dual variables, which are summarized in Subroutine \ref{algm:axl-csns-updt}.

\begin{algorithm}
\SetAlgorithmName{Subroutine}{subroutine}{List of Subroutines}
\SetAlgoLined
\caption{Auxiliary-Consensus-Update (ACU)}\label{algm:axl-csns-updt}
\textbf{Input:} $\{\supsub{\Tilde{\by}}{(k)}{i}\}$, $\supsub{\{\Tilde{\mu}}{(k)}{ji}\}$;

\For{agent $i \in \agentN$}{
For each $j \in \supsub{\mathcal{N}}{-}{i}$, receive $\supsub{\tilde{\mu}}{(k)}{ij}$ and $\supsub{\tilde{y}}{(k)}{ij}$\;
$\supsub{y}{(k+1)}{i} \coloneqq \frac{1 + \tau_{4i}^2}{1 + \tau_{4i}^2 (1 + N^-_i)}(\supsub{\Tilde{y}}{(k)}{i} + \frac{\tau_{4i}}{1 + \tau_{4i}^2}\underset{j \in \agentN^-_i}{\sum}(\supsub{\Tilde{\mu}}{(k)}{ij} + \supsub{\tau}{}{4i} \supsub{\Tilde y}{(k)}{ij}))$\;
}
\For{edge $(j, i) \in \edgeE$}{
Receive $\supsub{y}{(k+1)}{j}$ from its tail and $\supsub{\tilde{y}}{(k)}{ji}$ from its head \;
$\supsub{\mu}{(k+1)}{ji} \coloneqq \frac{1}{1 + \tau_{4j}^2}\supsub{\Tilde{\mu}}{(k)}{ji} + \frac{\tau_{4j}}{1 + \tau_{4j}^2}(\supsub{\Tilde{y}}{(k)}{ji} - \supsub{y}{(k+1)}{j})$\;
}
\For{agent $i \in \agentN$}{
For each $j \in \supsub{\mathcal{N}}{+}{i}$, receive $\supsub{\mu}{(k+1)}{ji}$ from its in-edge \;
$ \supsub{y}{(k+1)}{ji} \coloneqq \supsub{\Tilde{y}}{(k)}{ji} - \supsub{\tau}{}{4j} \supsub{\mu}{(k+1)}{ji}$\;
}
\textbf{Return:} $\{\supsub{\by}{(k+1)}{i}\}$, $\{\supsub{\mu}{(k+1)}{ji}\}$.
\end{algorithm}

For $\bar{\psi}^{(k+1)} \coloneqq J_{\Phi^{-1}\optB}(\hat{\psi}^{(k+1)})$, the inclusions of $\bar{x}, \bar{\sigma}$, and $\bar{y}$ are coupled locally, and each agent $i$ should update them by solving the following local system of inclusions: 
\begin{equation}\label{eq:xsigmay-incl3}
\begin{split}
\partial_{x_i} \iota_{\glbset_i}(\supsub{\bar{x}}{(k+1)}{i}, \supsub{\bar{\sigma}}{(k+1)}{i}) + 
\partial_{x_i} \iota_{\feaset_i}(\supsub{\bar{x}}{(k+1)}{i}, \supsub{\bar{\sigma}}{(k+1)}{i}, \supsub{\bar{\by}}{(k+1)}{i}) & \\
+ \partial_{x_i} \iota_{\mathcal{X}_i}(\supsub{\bar{x}}{(k+1)}{i}) + \tfrac{1}{2}A_i^T\supsub{\hat{\lambda}}{(k+1)}{i} + \tfrac{1}{\tau_{1i}}(\supsub{\bar{x}}{(k+1)}{i} - \supsub{\hat{x}}{(k+1)}{i})&\ni \bzero \\
\partial_{\sigma_i} \iota_{\glbset_i}(\supsub{\bar{x}}{(k+1)}{i}, \supsub{\bar{\sigma}}{(k+1)}{i}) 
+ \partial_{\sigma_i} \iota_{\feaset_i}(\supsub{\bar{x}}{(k+1)}{i}, \supsub{\bar{\sigma}}{(k+1)}{i}, \supsub{\bar{\by}}{(k+1)}{i}) & \\
+ \tfrac{1}{2}\supsub{\hat{\lambda}}{(k+1)}{i} + \tfrac{1}{\tau_{2i}}(\supsub{\bar{\sigma}}{(k+1)}{i} - \supsub{\hat{\sigma}}{(k+1)}{i})&\ni \bzero \\
\partial_{\by_i} \iota_{\feaset_i}(\supsub{\bar{x}}{(k+1)}{i}, \supsub{\bar{\sigma}}{(k+1)}{i}, \supsub{\bar{\by}}{(k+1)}{i}) + \tfrac{1}{\tau_{4i}}(\supsub{\bar{\by}}{(k+1)}{i} - \supsub{\hat{\by}}{(k+1)}{i})&\ni \bzero.
\end{split}
\end{equation}
Define $M_{\feaset_i} \coloneqq [(N-1)A_i, -I_l, -\axlW_{ii} \otimes I_l, -[\axlW_{ji}]_{\scriptscriptstyle j \in \neighbN{+}{i}}^T \otimes I_l] \in \rset{\scriptscriptstyle l \times (n_i + l(\supsub{N}{+}{i} + 2))}{}$, 
and $\supsub{\check{x}}{(k+1)}{i} \coloneqq \supsub{\hat{x}}{(k+1)}{i} - \frac{\tau_{1i}}{2}A_i^T\supsub{\hat{\lambda}}{(k+1)}{i}$ and $\supsub{\check{\sigma}}{(k+1)}{i} \coloneqq \supsub{\hat{\sigma}}{(k+1)}{i} - \frac{\tau_{2i}}{2}\supsub{\hat{\lambda}}{(k+1)}{i}$. 
Then, finding zeros of \eqref{eq:xsigmay-incl3} is equivalent to solving the following constrained minimization problem:
\begin{equation}\label{eq:optB-updt-xsigy}
\begin{cases}
\underset{x_i\in \mathcal{X}_i, \sigma_i, \by_i}{\minimize} &  \frac{1}{2\tau_{1i}}\norm{x_i - \supsub{\check{x}}{(k+1)}{i}}^2 + \frac{1}{2\tau_{2i}}\norm{\sigma_i - \supsub{\check{\sigma}}{(k+1)}{i}}^2 \\
& \qquad + \frac{1}{2\tau_{4i}}\norm{\by_i - \supsub{\hat{\by}}{(k+1)}{i}}^2 \\
\subj &  M_{\feaset_i}\cdot[x_i ; \sigma_i ; \by_i] = \bzero, A_ix_i + \sigma_i \leq c
\end{cases}.
\end{equation}
For the variables maintained by the edges, the dual variable $\bar{\mu}$ remains the same, while $\bar{\lambda}_{ji}$ is updated by:
\begin{equation}\label{eq:optB-updt-lambda}
\begin{split}
& \supsub{\bar{\lambda}}{(k+1)}{ji} = \supsub{\hat{\lambda}}{(k+1)}{ji} + \supsub{\tau}{}{3ji} \cdot \big(\supsub{A}{}{i}\supsub{\bar{x}}{(k+1)}{i} + \supsub{\bar{\sigma}}{(k+1)}{i} - \supsub{A}{}{j}\supsub{\bar{x}}{(k+1)}{j}  \\
& \quad - \supsub{\bar{\sigma}}{(k+1)}{j} - \tfrac{1}{2}(\supsub{A}{}{i}\supsub{\hat{x}}{(k+1)}{i} + \supsub{\hat{\sigma}}{(k+1)}{i} - \supsub{A}{}{j}\supsub{\hat{x}}{(k+1)}{j} - \supsub{\hat{\sigma}}{(k+1)}{j})\big).
\end{split}
\end{equation}
Overall, Algorithm \ref{algm:multiagt-opt-agg} summarizes the proposed algorithm for finding a minimizer of \eqref{eq:probsetup} based on the DR framework \eqref{eq:D-R}. 

\begin{theorem}
Suppose that Assumptions \ref{asp:subgrad}-\ref{asp:comm} hold, the sequence $(\supsub{\gamma}{(k)}{})_{k \in \nset{}{}}$ satisfies $\supsub{\gamma}{(k)}{} \in [0, 1]$ and $\sum_{k \in \nset{}{}}\supsub{\gamma}{(k)}{}(1 - \supsub{\gamma}{(k)}{}) = +\infty$, and $\btau_1$, $\btau_2$, $\btau_3$ and $\btau_4$ are properly chosen such that the design matrix $\Phi$ is positive definite. Then the sequence $(x^k)_{k \in \nset{}{}}$ generated by Algorithm~\ref{algm:multiagt-opt-agg} will converge to a zero of the operator $\optT$ and thereby to a v-GNE of the original GNEP in \eqref{eq:probsetup}.
\end{theorem}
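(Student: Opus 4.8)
The plan is to identify Algorithm~\ref{algm:multiagt-opt-agg} as a realization of the Douglas--Rachford iteration \eqref{eq:D-R} for the splitting $\optT = \optA + \optB$, carried out not in the Euclidean space but in the Hilbert space $\mathcal{H}_\Phi$ obtained by endowing $\rset{d}{}$ (with $d$ the dimension of the stacked variable $\psi$) with the inner product $\langle u, v \rangle_\Phi \coloneqq u^T \Phi v$, and then to invoke the standard DR convergence result \cite[Thm.~26.11]{BauschkeHeinzH2017CAaM}. Because $\Phi$ is positive definite by hypothesis, $\langle \cdot, \cdot\rangle_\Phi$ is a bona fide inner product and $\mathcal{H}_\Phi$ is a finite-dimensional Hilbert space. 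The pivotal remark is that a standard maximally monotone operator $\optA$ remains maximally monotone after left-multiplication by $\Phi^{-1}$ once the metric is changed to $\langle \cdot, \cdot\rangle_\Phi$, since $\langle \Phi^{-1}\optA u - \Phi^{-1}\optA v,\, u - v\rangle_\Phi = \langle \optA u - \optA v,\, u - v\rangle$; thus $\Phi^{-1}\optA$ and $\Phi^{-1}\optB$ are maximally monotone on $\mathcal{H}_\Phi$, their resolvents $J_{\Phi^{-1}\optA}$, $J_{\Phi^{-1}\optB}$ are firmly nonexpansive there, and the relaxed DR operator is averaged in the $\Phi$-norm.

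I would organize the argument in three steps. \emph{Step 1 (maximal monotonicity).} As already noted below the definition of $\Phi$, each of $\optA$, $\optB$ is the sum of the subdifferential of a CCP function---maximally monotone under Assumptions~\ref{asp:subgrad}--\ref{asp:fesbset}---and a skew-symmetric linear map, which is single-valued, full-domain, and maximally monotone. By the sum theorem for maximally monotone operators (one summand having full domain), $\optA$ and $\optB$ are maximally monotone, hence so are $\Phi^{-1}\optA$, $\Phi^{-1}\optB$ on $\mathcal{H}_\Phi$ by the remark above. \emph{Step 2 (the algorithm is the DR recursion).} Using $J_{\Phi^{-1}\optA} = (\Phi + \optA)^{-1}\Phi$, the first step of \eqref{eq:D-R} is the inclusion $(\Phi + \optA)\supsub{\psi}{(k+1)}{} \ni \Phi \supsub{\tilde\psi}{(k)}{}$, whose block-wise expansion is exactly \eqref{eq:optA-updt-x-sigma}, \eqref{eq:optA-updt-lambda}, and Subroutine~\ref{algm:axl-csns-updt}; the analogous identity for $\optB$ turns the third step into \eqref{eq:xsigmay-incl3}, equivalently the constrained proximal problem \eqref{eq:optB-updt-xsigy} together with \eqref{eq:optB-updt-lambda}. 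The reflection and Krasnoselskij--Mann steps coincide with \eqref{eq:D-R} verbatim, with relaxation $\lambda_k = 2\supsub{\gamma}{(k)}{} \in [0,2]$, so that $\sum_{k} \lambda_k(2 - \lambda_k) = 4\sum_{k}\supsub{\gamma}{(k)}{}(1-\supsub{\gamma}{(k)}{}) = +\infty$ matches the hypothesis of \cite[Thm.~26.11]{BauschkeHeinzH2017CAaM}. \emph{Step 3 (existence of zeros and conclusion).} By Assumption~\ref{asp:e&uNE} the minimizer set of \eqref{eq:probsetup} is nonempty, so the converse direction of Theorem~\ref{thm:agggamecnst} produces a point in $\zer{\optT}$, giving $\zer{\optT} \neq \varnothing$. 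Invoking \cite[Thm.~26.11]{BauschkeHeinzH2017CAaM} in $\mathcal{H}_\Phi$ then yields convergence of the governing iterates and of the shadow sequence $(\supsub{\psi}{(k)}{})_{k\in\nset{}{}}$ to some $\psi^* \in \zer{\optT}$. Extracting its $x$-block and applying the forward direction of Theorem~\ref{thm:agggamecnst} shows that $(\supsub{x}{(k)}{})$ converges to a minimizer of \eqref{eq:probsetup}, i.e.\ a v-GNE.

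I expect the main obstacle to be the rigorous transfer of the DR machinery to the non-Euclidean metric $\mathcal{H}_\Phi$: one must verify that positive definiteness of $\Phi$ is genuinely sufficient to preserve maximal monotonicity, firm nonexpansiveness of the two resolvents, and averagedness of the DR operator in the $\Phi$-norm, so that the cited theorem applies without modification. A secondary, more laborious difficulty is the block-by-block confirmation in Step~2 that the explicit updates \eqref{eq:optA-updt-x-sigma}--\eqref{eq:optB-updt-lambda} and Subroutine~\ref{algm:axl-csns-updt} really do solve the two resolvent inclusions; this is conceptually routine but requires carefully tracking the sparsity pattern imposed by $B_l$, $M_y'$, and the $\feaset_i$-constraints.
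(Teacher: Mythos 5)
Your proposal is correct and follows exactly the route the paper takes: the paper's own proof is a one-sentence appeal to \cite[Thm.~26.11]{BauschkeHeinzH2017CAaM} via the maximal monotonicity of $\optA$ and $\optB$, and your three steps (preconditioned monotonicity in the $\Phi$-metric, identification of the algorithm with the DR recursion \eqref{eq:D-R}, and nonemptiness of $\zer{\optT}$ via Assumption~\ref{asp:e&uNE} and Theorem~\ref{thm:agggamecnst}) are precisely the details that sentence compresses. No gap; your version is simply the fully spelled-out form of the paper's argument.
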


\begin{proof}
The convergence of the proposed algorithm immediately follows from the fact that the operators $\optA$ and $\optB$ are both maximally monotone \cite[Thm.~26.11]{BauschkeHeinzH2017CAaM}.
\end{proof}

\begin{algorithm}
\small
\SetAlgoLined
\caption{Algorithm for Globally-Coupled Multi-Agent Optimization Problem}\label{algm:agggame}
\textbf{Initialize:} $\{\supsub{\tilde{x}}{(0)}{i}\}, \{\supsub{\tilde{\sigma}}{(0)}{i}\}, \{\supsub{\tilde{\lambda}}{(0)}{ji}\}, \{\supsub{\tilde{\by}}{(0)}{i}\}, \{\supsub{\tilde{\mu}}{(0)}{ji}\}$\;
\textbf{Iterate until convergence:}\\ 
$(\{\supsub{\by}{(k+1)}{i}\}, \{\supsub{\mu}{(k+1)}{ji}\}) \coloneqq \text{ACU}(\{\supsub{\Tilde{\by}}{(k)}{i}\}, \{\supsub{\Tilde{\mu}}{(k)}{ji}\})$ \;
\For{agent $i \in \mathcal{N}$}{
Receive $\supsub{\tilde{\lambda}}{(k)}{ji}$ and $\supsub{\tilde{\lambda}}{(k)}{ij}$ from its in- and out-edges \;
Obtain $\supsub{x}{(k+1)}{i}$ and $\supsub{\sigma}{(k+1)}{i}$ by solving \eqref{eq:optA-updt-x-sigma}\;
R-R updates: $\supsub{\hat{x}}{(k+1)}{i}, \supsub{\hat{\sigma}}{(k+1)}{i}$, $\supsub{\hat{\by}}{(k+1)}{i}$\;
}
\For{edge $(j,i) \in \mathcal{E}$}{
Receive $A_i\supsub{\hat{x}}{(k+1)}{i} + \supsub{\hat{\sigma}}{(k+1)}{i}$ and $A_i\supsub{\hat{x}}{(k+1)}{j} + \supsub{\hat{\sigma}}{(k+1)}{j}$\; 
Obtain $\supsub{\lambda}{(k+1)}{ji}$ by \eqref{eq:optA-updt-lambda}; R-R updates: $\supsub{\hat{\lambda}}{(k+1)}{ji}$\;
}
\For{agent $i \in \mathcal{N}$}{
Receive $\supsub{\hat{\lambda}}{(k+1)}{ji}$ and $\supsub{\hat{\lambda}}{(k+1)}{ij}$ from its in- and out-edges \;
Obtain $\supsub{\bar{x}}{(k+1)}{i}$, $\supsub{\bar{\sigma}}{(k+1)}{i}$ and $\supsub{\bar{\by}}{(k+1)}{i}$ by solving \eqref{eq:optB-updt-xsigy}\;
}
\For{edge $(j, i) \in \mathcal{E}$}{
Receive $A_i\supsub{\bar{x}}{(k+1)}{i} + \supsub{\bar{\sigma}}{(k+1)}{i}$ and $A_i\supsub{\bar{x}}{(k+1)}{j} + \supsub{\bar{\sigma}}{(k+1)}{j}$\;
Obtain $\supsub{\bar{\lambda}}{(k+1)}{ji}$ by \eqref{eq:optB-updt-lambda}\;
}
K-M updates: $\supsub{\tilde{\psi}}{(k+1)}{} = \supsub{\tilde{\psi}}{(k)}{} + 2\supsub{\gamma}{(k)}{}(\supsub{\bar{\psi}}{(k+1)}{} - \supsub{\psi}{(k+1)}{})$\;
\textbf{Return: }$\{\supsub{x}{(k)}{i}\}$.
\label{algm:multiagt-opt-agg}
\normalsize
\end{algorithm}

\section{Commodity Distribution Problem}

We consider a commodity distribution problem adapted from \cite[Sec.~1.4.3]{facchinei2007finite}, \cite{parise2019distributed}, where several branches of the same company produce a common homogeneous commodity. 
A transport network exists that has markets as its nodes and roads as its edges. 
Denote the node set of this network by $\mathcal{N}_T$ and the edge set $\mathcal{E}_T$, the cardinalities of which are $N_T$ and $E_T$, respectively. 
These branches attempt to cooperatively optimize their total profit by deciding the production quantities at the factories and the distribution quantities over the markets. 

Each branch $i$ delivers the commodity from the factories, denoted by $\mathcal{N}_{T_i}$, to different markets through the transport network. 
Let $N_{T_i} \coloneqq | \mathcal{N}_{T_i} |$.
Its decision vector $x_i \in \rset{n_i}{}$ with $n_i \coloneqq E_T + N_{T_i}$ consists of two parts: $u_i \in \rset{E_T}{+}$ represents the quantities of commodity transported through each road $e_T \in \mathcal{E}_T$; $\nu_i \in \rset{N_{T_i}}{+}$ denotes the quantities of commodity produced by the factories owned by branch $i$. 
These two parts uniquely determine the distribution of commodity over the markets. 
Assuming the factories owned by branch $i$ have maximum production capacities $b_i \in \rset{N_{T_i}}{++}$, each entry of the vector $u_i \in \rset{E_T}{}$ is upper-bounded by $\norm{b_i}_1$.
Denote by $B_T \in \rset{N_T \times E_T}{}$ the incidence matrix of this transport network, and by $E_i \in \rset{N_T \times N_{T_i}}{}$ the indicator matrix which maps from each entry of $\nu_i$ to the corresponding markets. 
Then, we have the local matrix $A_i \coloneqq [B_T, E_i]$ and the local feasible set 
$\mathcal{X}_i \coloneqq \{x_i \in \rset{E_T + N_{T_i}}{} | \bzero \leq \nu_i \leq b_i, 
\bzero \leq u_i \leq \norm{b_i}_1 \otimes \bone_{E_T}, A_ix_i \geq \bzero\}$. 

The objective function of branch $i \in \agentN$ is given by
$J_i(x_i, x_{-i}) = \frac{1}{2}{x_i}^TQ_ix_i + \alpha_i\norm{Ax}^2_2 - (w - \Sigma Ax)^TA_ix_i$, 
where $A \coloneqq [A_1, \ldots, A_N]$, $x \coloneqq [x_1; \cdots; x_N]$, $Q_i \in \pdset{n_i}{++}$ is a diagonal matrix, and we let $w \in \rset{N_T}{++}$ denote the initial unit price, and $\Sigma \in \pdset{N_T}{++}$ the decreasing rate of unit price. 
We further assume that there is a maximum capacity $c \in \rset{N_T}{++}$ for the commodity sold at different markets, and the global constraints are accordingly defined as 
${\textstyle\sum}_{i \in \agentN} A_ix_i \leq c$. 

\begin{figure}
    \centering
    \includegraphics[width=0.42\textwidth]{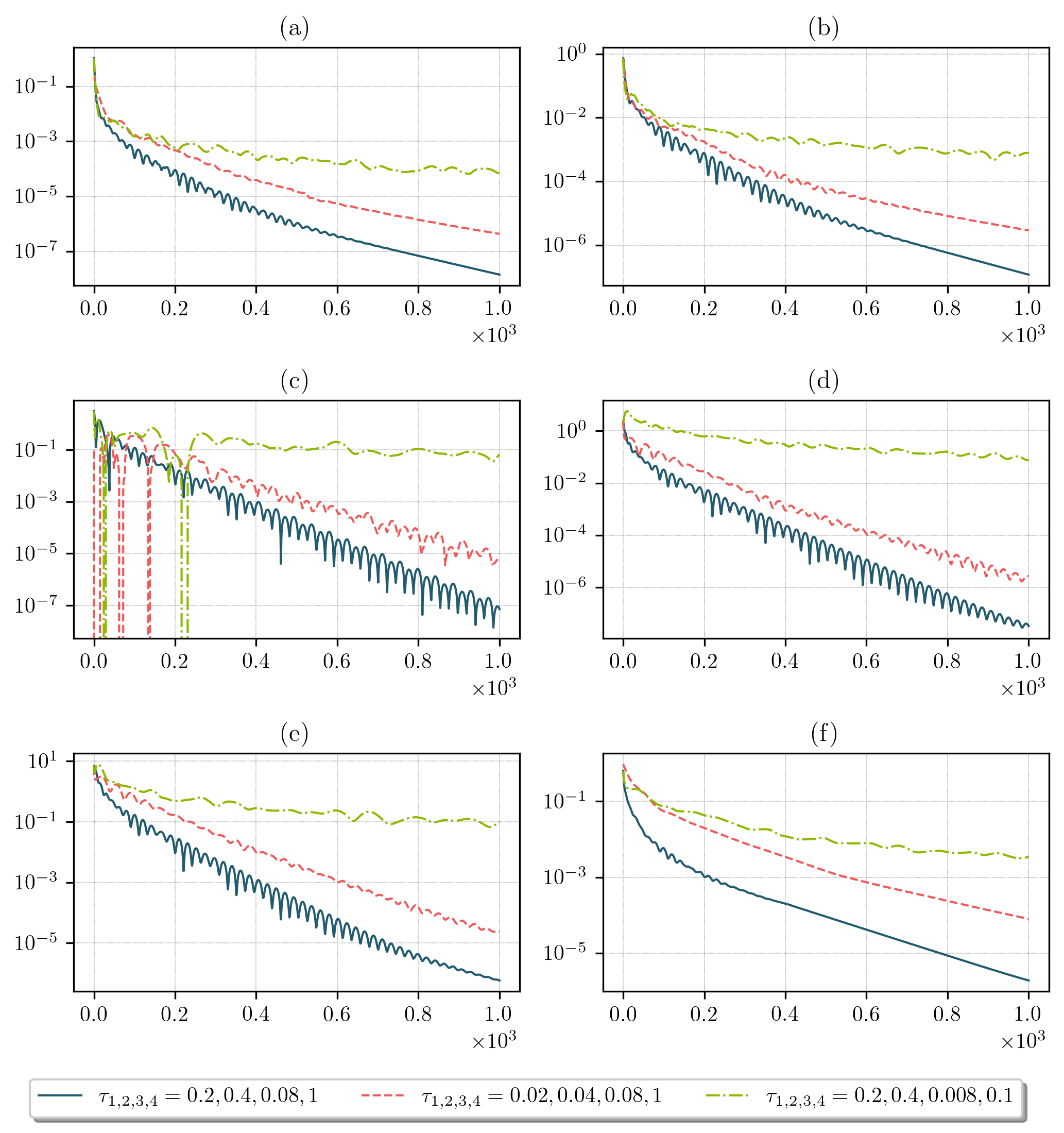}
    \caption{\small{Performances of Algorithm \ref{algm:multiagt-opt-agg} (a) $\sum_{\scriptscriptstyle i \in \agentN} \norm{\supsub{x}{(k+1)}{i} - \supsub{x}{(k)}{i}} / N\supsub{x}{(k)}{i}$, (b) $\sum_{\scriptscriptstyle i \in \agentN} \norm{\supsub{y}{(k+1)}{i} - \supsub{y}{(k)}{i}} / N\norm{\supsub{y}{(k)}{i}}$, (c) $\norm{\max\{0, \sum_{\scriptscriptstyle i\in\agentN}\supsub{A}{}{i}\supsub{x}{(k)}{i} - c\}}$, (d) $\sum_{\scriptscriptstyle (j, i) \in \edgeE} \norm{\supsub{y}{(k+1)}{ji} - \supsub{y}{(k)}{j}}/E$, (e) $\sum_{\scriptscriptstyle i \in \agentN} \norm{\supsub{\sigma}{(k)}{i} - \sum_{\scriptscriptstyle j \in \neighbN{}{-i}}\supsub{A}{}{j}\supsub{x}{(k)}{j}}/N$, (f)$\;\;\sum_{\scriptscriptstyle i \in \agentN} \norm{\supsub{x}{(k+1)}{i} - \supsub{x}{*}{i}} / N\norm{\supsub{x}{*}{i}}$}.}
    \label{fig:perf-multimetrics}
\end{figure}

\begin{figure}
    \centering
    \includegraphics[width=0.4\textwidth]{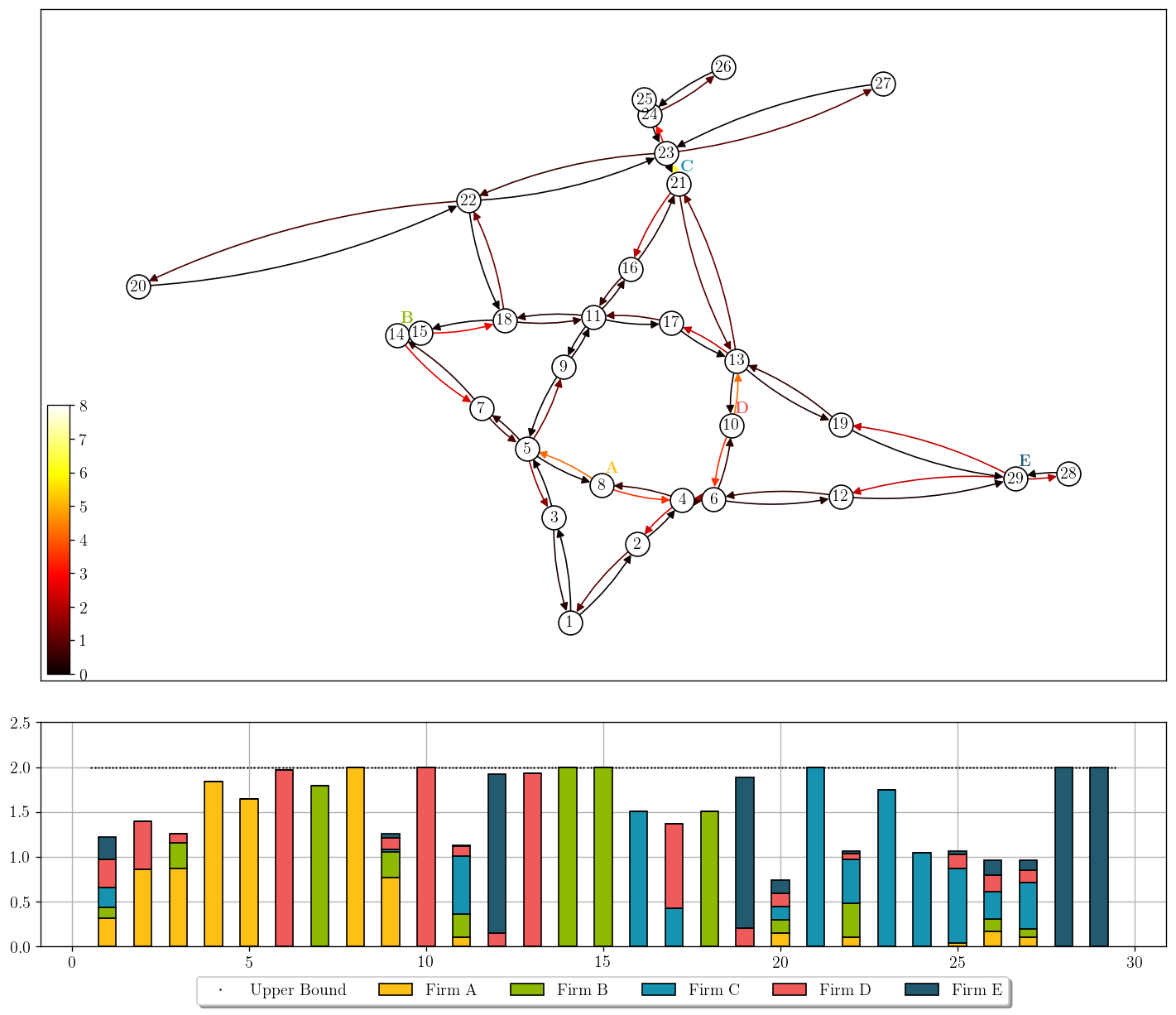}
    \caption{The Commodity Distribution over the Transport Net}
    \label{fig:traffic-plot}
\end{figure}

We use the transport network of the city of Oldenburg  \cite{brinkhoff2002framework}: it consists of $N_T = 29$ nodes (markets) and $E_T = 2 \times 34$ edges (roads). 
Five branches ($N=5$) participate in this problem, each owning a single factory at the given location ($\{8, 14, 21, 10, 29\}$). 
Each factory has a maximum production capacity uniformly randomly chosen from $[10, 14]$. 
The normalized length $\eta_{e_T}$ is defined as the ratio between the length of road $e_T$ and the maximum road length. 
Each diagonal entry of matrix $Q_i$ corresponding to $u_i$ is set as $7\eta_{e_T}$, 
while the entries corresponding to $\nu_i$ are fixed to be $2.8$. 
Moreover, $w = 36 \cdot \bone_{N_T}$, and $\Sigma$ is a matrix with $[\Sigma]_{ii} = 0.23$, for all $i \in \mathcal{N}_T$, $[\Sigma]_{ji} = 0.069 \cdot (1-\eta_{(j, i)})$, for all $(j, i) \in \mathcal{E}_T$, and otherwise zero. 
We also let $\alpha_i \coloneqq 0.21$ and $c \coloneqq 2 \cdot \bone_{N_T}$. 
We can verify numerically that each $J_i$ is jointly convex in $x_i$ and $s_i$ as required in Assumption~\ref{asp:subgrad}. 
The communication network $\commg$ contains a directed circle and $\floor*{\frac{N}{2}} = 2$ randomly selected directed edges. 

The package CVXPY 1.0.31 \cite{diamond2016cvxpy, agrawal2018rewriting} is used to solve the constrained optimization problem for each agent. 
We illustrate the performances of Algorithm \ref{algm:multiagt-opt-agg} in Fig.\ref{fig:perf-multimetrics}. Fig.\ref{fig:perf-multimetrics}(a)(b) describe the relative updating step sizes at each iteration for $x_i$ and $y_i$. 
Fig.\ref{fig:perf-multimetrics}(c)(d)(e) illustrate how the feasibility conditions are satisfied at each iteration.  Fig.\ref{fig:perf-multimetrics}(e) shows the normalized distance between the decision variables obtained by the proposed algorithm and the unique minimizer. 
Fig.\ref{fig:traffic-plot} visualizes the solution computed by Algorithm \ref{algm:multiagt-opt-agg}, which is reflected by the commodity distribution over the markets and the quantity of commodity transported through each road. 
These numerical results verify the validity of Algorithm~\ref{algm:multiagt-opt-agg}, and show a linear convergence rate towards a solution. 

\section{Conclusions and Future Directions}

We propose a distributed solution for multi-agent optimization problems that are globally coupled by aggregates. 
Although we only discuss the cases with homogeneous aggregates in objectives and constraints, with the introduction of a new set of auxiliary variables and its associated local estimates, the proposed algorithm can solve problems with heterogeneous aggregates.
Moreover, the results can be directly extended to handle the problems with convex constraints. 
One of our future directions is to extend the primal decomposition method to the non-cooperative setting. 
Even though the analysis in this paper can be applied to non-cooperative aggregative games with minor modifications, yet to ensure the convergence of the algorithm candidate, we need to postulate that the invoked extended pseudogradient operators are maximally monotone, which holds if and only if the partial derivatives of local objectives w.r.t. local decisions do not depend on others' decisions \cite{belgioioso2018douglas, ye2017distributed, shi2017lana}. 
This condition dramatically restricts the applicability of the proposed method. 
In addition, we note that even under the cooperative setting, the solvable multi-agent optimization problems should have their local objectives jointly convex in the local decisions and the aggregates. 
Another future direction would lie in relaxing this assumption and enable the proposed algorithm to handle a wider range of problems. 

\section*{Acknowledgements}
The authors thank the reviewers for their suggestions and comments, in particular for pointing out the applicability issue of one of our previous assumptions. 
To address this comment, we modify the problem setting from non-cooperative games to cooperative cases. 
Extension to the non-cooperative setting is one of our future directions. 

\appendices
\section*{Appendix}
\renewcommand{\thesubsection}{\Alph{subsection}}

\subsection{Proof of Lemma \ref{le:equiv-initial}}
\label{pf:equiv-initial}
\begin{proof}
Under Assumptions~\ref{asp:subgrad} and \ref{asp:fesbset}, we can equivalently recast the optimization problems in \eqref{eq:probsetup3-1} as a set of inclusions based on the KKT conditions given by: 
\begin{align}\label{eq:dist-kkt-pf}
\begin{split}
& \partial_{x_i} J_i(x_i, \sigma_i) + A_i^T\lambda_i + \sum_{j \in \neighbN{}{-i}}(-A_i^T d_j) + N_{\mathcal{X}_i}(x_i) \ni \bzero \\
& \partial_{\sigma_i} J_i(x_i, \sigma_i) + \lambda_i + d_i \ni \bzero \\
& \bzero \leq \lambda_i \perp c - (A_ix_i + \sigma_i) \geq \bzero,
\end{split}
\end{align}
for each $i \in \agentN$. 
Here, $\lambda_i$ is the Lagrange multiplier enforcing the resource constraints $A_ix_i + \sigma_i \leq c$; $d_i$ is the multiplier enforcing the correct aggregate estimation, i.e., $\sigma_i = \sum_{j \in \neighbN{}{-i}}A_jx_j$. 
Notably, besides the explicit local problem formulation of agent $i$ in \eqref{eq:probsetup3-1}, the decision vector $x_i$ is also involved in the constraints $\sigma_j = \sum_{p \in \neighbN{}{-j}}A_px_p$ for all $j \in \neighbN{}{-i}$, and that is why we need to incorporate $\sum_{j \in \neighbN{}{-i}}(-A_i^T d_j)$ into the first inclusion of \eqref{eq:dist-kkt-pf}. 

We assume that for each agent $i \in \agentN$, its local optimization problem \eqref{eq:probsetup3-1} admits a minimizer $x^*_i$. 
Let $x^* \coloneqq [x^*_i]_{i \in \agentN}$.
Then for each $i \in \agentN$, there exists some $\lambda^*_i \in \rset{l}{}$ and $d^* \coloneqq [d^*_i]_{i \in \agentN} \in \rset{Nl}{}$, such that $(x^*_i, \sigma^*_i, \lambda^*_i, d^*)$ is a solution of the KKT system in \eqref{eq:dist-kkt-pf}, with $\sigma^*_i = \sum_{j \in \neighbN{}{-i}}A_jx^*_j$. 
Since the estimation constraints of $\{\sigma^*_i\}_{i \in \agentN}$ are commonly shared by all agents, the vector $d^*$, which is the stack of all local Lagrange multipliers $\{d_i\}_{i \in \agentN}$ for the estimation constraints, keeps identical among all agents. 
Based on the second inclusion of \eqref{eq:dist-kkt-pf}, for each $j \in \neighbN{}{-i}$, the following inclusion holds: 
\begin{align}
    -A_i^Td^*_j \in A_i^T\partial_{\sigma_j} J_j(x^*_j, \sigma^*_j) + A^T_i\lambda^*_j.
\end{align}
Substituting each $-A^T_id^*_j$ in the first inclusion of \eqref{eq:dist-kkt-pf} with the R.H.S.  of the above inclusion yields: 
\begin{align*}
 \partial_{x_i} J_i(x^*_i, \sigma^*_i) + &{\textstyle\sum}_{j \in \neighbN{}{-i}}A^T_i\partial_{\sigma_i}J_j(x^*_j, \sigma^*_j) \\
& + A^T_i({\textstyle\sum}_{j \in \agentN} \lambda^*_j) + N_{\mathcal{X}_i}(x^*_i) \ni \bzero,
\end{align*}
for each $i \in \agentN$. 
Moreover, since $\bzero \leq \lambda^*_i \perp c - (A_ix^*_i + \sigma^*_i) \geq \bzero$ and $\sigma^*_i = \sum_{j \in \neighbN{}{-i}}A_jx^*_j$ hold for each $i \in \agentN$, the following results trivially follows:
\begin{align*}
\bzero \leq \sum_{i \in \agentN} \lambda^*_i \perp c - \sum_{i \in \agentN} A_ix^*_i \geq \bzero,
\end{align*}
which completes the proof that $(x^*, \sum_{i \in \agentN} \lambda^*_i)$ is a solution of \eqref{eq:kkt-mini}. 
Conversely, if the KKT system in \eqref{eq:kkt-mini} admits a solution $(x^\dagger, \lambda^\dagger)$, where $x^\dagger \coloneqq [x^\dagger_i]_{i \in \agentN}$ and $\lambda^\dagger \in \rset{l}{}$. 
For each agent $i \in \agentN$, we let $\sigma^\dagger_i \coloneqq \sum_{j \in \neighbN{}{-i}} A_jx^\dagger_j$, $\lambda^\dagger_i \coloneqq \frac{1}{N}\lambda^\dagger$, and $d^\dagger_i \in -\partial_{\sigma_i}J_i(x^\dagger_i, \sigma^\dagger_i) - \lambda^\dagger_i$. 
It is obvious that $(x^\dagger_i, \sigma^\dagger_i, \lambda^\dagger_i, d^\dagger)$ with $d^\dagger \coloneqq [d^\dagger_i]_{i \in \agentN}$ is a solution of \eqref{eq:dist-kkt-pf}. 
Therefore, $(x^\dagger_i, \sum_{j \in \neighbN{}{-i}}A_jx^\dagger_j)$ is a minimizer of the local optimization problem~\eqref{eq:probsetup3-1} for each $i \in \agentN$. 
\end{proof}

\subsection{Proof of Theorem \ref{thm:agggamecnst}}
\label{pf:agggamecnst}

\begin{proof}
Suppose $\zer{\optT} \neq \varnothing$ and consider the vector $\psi^* = [x^*; \sigma^*; \lambda^*; \omega^*] \in \zer{\optT}$. 
Firstly, the rows in $(M_y' \cdot \omega^*)$ corresponding to $\mu$ being zero implies the consensus between $y_{ji}^*$ and $y_j^*$ for all $(j, i) \in \edgeE$. 
Furthermore, based on the indicator function $\iota_{\feaset_i}(x_i^*, \sigma_i^*, \by_i^*)$ and the third row of the operator $\optT$ to be zero, i.e., $-B_l^TAx^*-B_l^T\sigma^* = 0$, the following two equations must hold for each agent $i$:
\begin{equation}\label{eq:sigmaeq1}
\begin{split}
(N-1)A_i x_i^* - \sigma_i^* &= \axlW_{ii} y_i^* + {\textstyle\sum}_{j \in \neighbN{+}{i}}\axlW_{ji} y_{j}^* \\ 
A_i x_i^* + \sigma_i^* &= A_j x_j^* + \sigma_j^*, \forall j \in \agentN_{i}^{+}.
\end{split}
\end{equation}
Summing over the L.H.S. of the first equation in (\ref{eq:sigmaeq1}) across all agents, we have
\small
\begin{equation}
\begin{split}
&\sum_{j \in \agentN} (N-1)A_j x_j^* - \sigma_j^*  \\
& = \sum_{j \in \agentN} (N-1)A_j x_j^* + \sum_{j \in \agentN} \Big(A_j x_j^* - A_i x_i^* - \sigma_i^*\Big)
\\ 
&= N \sum_{j \in \agentN} A_j x_j^* - NA_i x_i^* - N\sigma_i^* 
= N(\sum_{j \in \agentN_{-i}}A_j x_j^* - \sigma_i^*),
\end{split}
\end{equation}
\normalsize
where the first equality comes from the second equation in (\ref{eq:sigmaeq1}).
Similarly, adding up the R.H.S. of the first equation in (\ref{eq:sigmaeq1}) yields,
\small
\begin{equation}
\begin{split}
\sum_{i \in \agentN} \Big(\axlW_{ii}y_i^* + \sum_{j \in \neighbN{+}{i}}\axlW_{ji}y_{j}^*\Big)
&= (\bone_N^T \otimes I_l) (\axlW^T \otimes I_l)y^* \\
&= ((\axlW\cdot\bone_N)^T \otimes I_l)y^* = \bzero_l,
\end{split}
\end{equation}
\normalsize
where $y^* = [y^*_1; \cdots; y^*_N]$. As a result, for all $i \in \agentN$, $\sigma_i^* = \sum_{j \in \neighbN{}{-i}}A_jx_j^*$.
With these results, if we check the rows in $\optT$ w.r.t. $x_i$, $\sigma_i$, and $\by_i$, we can obtain the following inclusions: 
\begin{equation}\label{eq:zeroincl3}
\begin{split}
\begin{bmatrix} \partial_{x_i}J_i(x_i^*, \sigma_i^*) + N_{\mathcal{X}_i}(x_i^*) \\ \partial_{\sigma_i}J_i(x^*_i, \sigma^*_i) \\ \bzero \\ \bzero \end{bmatrix}
+ \begin{bmatrix} (N-1)A_i^T \\ -I \\ -\Tilde{W}_{ii} \otimes I_l \\ -[\Tilde{W}_{ji}]_{j \in \neighbN{+}{i}} \otimes I_l  \end{bmatrix}&d_{1i} \\
+ \begin{bmatrix} A_i^T \\ I \\ \bzero \\ \bzero \end{bmatrix} d_{2i}  
+ \begin{bmatrix} A_i^T \\ I \\ \bzero \\ \bzero \end{bmatrix} \lambda^*_{iB}
+ \begin{bmatrix} \bzero \\ \bzero \\ -\sum_{k \in \neighbN{-}{i}} \mu^*_{ik} \\ [\mu^*_{ji}]_{j \in \neighbN{+}{i}} 
 \end{bmatrix} &\ni \bzero,
\end{split}
\end{equation}
where $d_{1i} \in \rset{l}{}$ and $d_{2i} \in \rset{l}{+}$ are the implicitly calculated Lagrange multipliers of the second and first constraints in \eqref{eq:probsetup3-2}, and 
$\lambda^*_{iB} \coloneqq (B_{i\cdot} \otimes I_l)\lambda^*= \sum_{j \in \neighbN{+}{i}}\lambda_{ji}^* - \sum_{j \in \neighbN{-}{i}}\lambda_{ij}^*$ is the explicitly calculated Lagrange multiplier. 
If we check the third and last rows of \eqref{eq:zeroincl3}, we can derive that for all $i \in \agentN$ and for all $k \in \neighbN{-}{i}$,
\begin{equation}
(W_{ii} \otimes I_l)d_i = -{\textstyle\sum}_{k \in \neighbN{-}{i}} \mu^*_{ik}, \;
(W_{ik} \otimes I_l)d_k = \mu^*_{ik}.
\end{equation}
Thus, $(W_{i \cdot} \otimes I_l)\bd = \bzero$, where $\bd = [d_{11}; \cdots; d_{1N}]$. Moreover, the preceding equality holds for all $i \in \agentN$, and we end up with  $(W \otimes I_l)\bd = 0$.
Given that the null space of  $(W \otimes I_l)$ is the consensus subspace, we finally get $d_{11} = \cdots = d_{1N} = d_1$.  
It follows from the second row of the inclusion in (\ref{eq:zeroincl3}) that 
$d_{2j} + \lambda^*_{jB} + \partial_{\sigma_j}J_j(x^*_j, \sigma^*_j) \ni d_{1}$ holds for every $j \in \agentN$. 
By substituting $(N-1)d_1$ in the first row of (\ref{eq:zeroincl3}) with 
$\sum_{j \in \neighbN{}{-i}} d_{2j} + \lambda^*_{jB} + \partial_{\sigma_j}J_j(x^*_j, \sigma^*_j)$ , 
we have the following relation:
\begin{equation}
\begin{split}
& \partial_{x_i}J_i(x_i^*, \sigma_i^*) + A_i^T\sum_{j \in \neighbN{}{-i}} \partial_{\sigma_j}J_j(x^*_j, \sigma^*_j) + N_{\mathcal{X}}(x_i^*) \\ 
&  \qquad + A_i^T(\sum_{j \in \agentN} \lambda^*_{jB}) +  A_i^T(\sum_{j \in \agentN} d_{2j}) \ni \bzero.
\end{split}
\end{equation}
Note that $\sum_{j \in \agentN}\lambda^*_{jB} = \bone^T_{N}(B_l\lambda^*) = \bzero$ by the property of the incident matrix $B_l$, 
and $\partial_{x_i}J(x^*) = \partial_{x_i}J_i(x^*_i, \sigma^*_i) + A_i^T\sum_{j \in \neighbN{}{-i}}\partial_{\sigma_j}J_j(x^*_j, \sigma^*_j)$. 
Let $d_2 \coloneqq \sum_{j \in \agentN} d_{2j}$. 
Consequently, for each agent $i \in \agentN$, we have 
\begin{equation}
\partial_{x_i}J(x^*) + A_i^Td_2 + N_{\mathcal{X}}(x_i^*) \ni \bzero,
\end{equation}
which corresponds to the first KKT condition in \eqref{eq:kkt-mini}. 

Since for each agent $i$, $d_{2i}$ is the Lagrange multiplier of the constraints in $\glbset_i$, it satisfies the following inclusions:
\begin{equation}\label{eq:pridu-cplmtry2}
\bzero \leq d_{2i} \perp A_ix_i^* + \sigma_i^* - c \leq \bzero.
\end{equation}
Notice that the R.H.S. of the above relations keep the same for all agents. 
Adding the L.H.S over all agents yields:
\begin{equation}
\bzero \leq d_2 \perp {\textstyle\sum}_{j \in \agentN} A_jx_j^*- c \leq \bzero, \forall i \in \agentN,
\end{equation}
which corresponds to the second KKT condition in \eqref{eq:kkt-mini}. 
Therefore, for any zeros $\psi^* \in \zer{\optT}$, $x^*$ is the minimizer of the original problem (\ref{eq:probsetup}).

Conversely, suppose there exists at least one solution of the problem (\ref{eq:probsetup3-1}) and denote this solution and its corresponding Lagrange multiplier as $(x^\dagger, \rho^\dagger)$. Then, we can manually set $\sigma_i^\dagger = \sum_{j \in \neighbN{}{-i}}A_jx_j^\dagger$. 
For each agent $i$, we have: 
\begin{equation}\label{eq:prf-mini-kkt}
\begin{split}
& \bzero \in \partial_{x_i}J(x^\dagger) + A_i^T\rho^\dagger + N_{\mathcal{X}_i}(x_i^\dagger) \\
& \bzero \leq \rho^\dagger \perp A_ix_i^\dagger + \sigma_i^\dagger - c \leq \bzero,
\end{split}
\end{equation} 
where $\partial_{x_i}J(x^\dagger)=\partial_{x_i}J_i(x^\dagger_i, \sigma^\dagger_i) + A_i^T\sum_{j \in \neighbN{}{-i}}\partial_{\sigma_j}J_j(x^\dagger_j, \sigma^\dagger_j)$.
We need to prove that there exist $\{y_i^\dagger\}_{i \in \agentN}, y_i^\dagger \in \rset{l}{}$, such that 
\begin{equation}
    (N-1)A_ix_i^\dagger - \sum_{j \in \agentN_{-i}}A_jx_j^\dagger = \axlW_{ii} y_i^\dagger + \sum_{j \in \neighbN{+}{i}} \axlW_{ji} y_j^\dagger, \forall i \in \agentN.
\end{equation}
After reformulating the above equations and concatenating them by row, we obtain:
\begin{equation}\label{eq:auxil-concat-eq}
N[A_ix_i^\dagger]_{i \in \agentN} - \bone_N \otimes ({\textstyle\sum}_{j \in \agentN}A_jx_j^\dagger) = (\axlW^T \otimes I_l) [y^\dagger_i]_{i \in \agentN}.
\end{equation}
Notice that 
\begin{equation}
\begin{split}
& (\bone_N^T \otimes I_l) \cdot (N[A_ix_i^\dagger]_{i \in \agentN} - \bone_N \otimes ({\textstyle\sum}_{j \in \agentN}A_jx_j^\dagger)) \\
& = N{\textstyle\sum}_{j \in \agentN}A_jx_j^\dagger - N \otimes({\textstyle\sum}_{j \in \agentN}A_jx_j^\dagger) = \bzero,
\end{split}
\end{equation}
which implies 
\begin{equation}
    N[A_ix_i^\dagger]_{i \in \agentN} - \bone_N \otimes ({\textstyle\sum}_{j \in \agentN}A_jx_j^\dagger) \in \mathcal{N}(\bone_N^T \otimes I_l).
\end{equation}
Since the communication graph $\commg$ is connected and the null space of $W$ is the range space of $\bone_N$, i.e., $\mathcal{N}(\axlW \otimes I_l) = \mathcal{R}(\bone_N\otimes I_l)$. 
As a result, $\mathcal{R}(\axlW^T \otimes I_l)^\perp = \mathcal{N}(\bone_N^T \otimes I_l)^\perp$, and thus $\mathcal{R}(\axlW^T \otimes I_l) = \mathcal{N}(\bone_N^T \otimes I_l)$. 
Notice that 
$(\axlW^T \otimes I_l)\cdot[y_1; \cdots; y_N]$ spans the range space of $(\axlW^T \otimes I_l)$. There always exists a vector $[y_1^\dagger; \cdots; y_N^\dagger]$, such that the equation (\ref{eq:auxil-concat-eq}) holds. 

After substituting $x_i^*$, $\sigma_i^*$, and $\lambda_i^*$ in (\ref{eq:zeroincl3}) with $x_i^\dagger$, $\sigma_i^\dagger$, and $\lambda_i^\dagger$, it suffices to prove that there exist $d_{1i}, d_{2i}$, and $\lambda^\dagger$ ($i \in \agentN$), such that (i) the inclusions in (\ref{eq:zeroincl3}) hold, (ii)  $d_{2i}$ satisfies $\bzero \leq d_{2i} \perp \sum_{j \in \agentN} A_jx_j^\dagger - c \leq \bzero$. 
We can let $d_{2i} = \frac{1}{N}\rho^\dagger$, $d_{1i} = d_1$ and for all $i \in \agentN$ and choose proper $\lambda^\dagger$, such that 
\begin{equation}
[\partial_{\sigma_i}J_i(x_i^\dagger, \sigma_i^\dagger)]_{i \in \agentN} + \bone_N \otimes (-d_1 + \frac{1}{N}\rho^\dagger) + B_l\lambda^\dagger \ni \bzero.
\end{equation}
The existences of $d_1$ and $\lambda^\dagger$ can be guaranteed by the fact that $(\bone_N \otimes (-d_1) + B_l\lambda^\dagger)$ spans the whole $\rset{Nl}{}$. 
Since $(x^\dagger, \rho^\dagger)$ satisfies the first condition in \eqref{eq:prf-mini-kkt}, it can be verified that with the chosen $d_1$, $\frac{1}{N}\rho^\dagger$ and $\lambda^\dagger$, the following relation holds for each agent $i$:
\begin{equation}
\partial_{x_i}J_i(\supsub{x}{\dagger}{i}, \supsub{\sigma}{\dagger}{i}) + A_i^T((N-1)d_1 + \frac{1}{N}\rho^\dagger+\supsub{\lambda}{\dagger}{iB}) + N_{\mathcal{X}_i}(\supsub{x}{\dagger}{i}) \ni \bzero.
\end{equation}
Accordingly, $\mu^\dagger$ can be obtained by the forth row of (\ref{eq:zeroincl3}). As a result, $[x^\dagger; \sigma^\dagger; \lambda^\dagger; \omega^\dagger] \in \zer{\optT}$.
\end{proof}

\bibliographystyle{IEEEtran}
\bibliography{IEEEabrv,references}

\end{document}